\documentclass[twocolumn]{autart}

\usepackage{graphicx}

\usepackage{amsmath,amssymb,amsfonts}
\usepackage{array}
\usepackage{booktabs}
\usepackage{multirow}
\usepackage[noend]{algpseudocode}
\usepackage{algorithm}
\usepackage{subcaption}
\theoremstyle{remark}
\newtheorem{remark}{Remark}

\newtheorem{assumption}{Assumption}

\usepackage{comment}

\begin{document}

\begin{frontmatter}

\title{Parameter Privacy-Preserving Data Sharing: A Particle-Belief MDP Formulation
\thanksref{footnoteinfo}}

\thanks[footnoteinfo]{Corresponding author Kaidi Yang.}

\author[IORA]{Haokun Yu}\ead{yuhaokun@u.nus.edu},    
\author[CEE]{Jingyuan Zhou}\ead{jingyuanzhou@u.nus.edu},               
\author[CEE]{Kaidi Yang}\ead{kaidi.yang@nus.edu.sg}  

\address[IORA]{Institute of Operations Research and Analytics, National University of Singapore}  
\address[CEE]{Civil and Environmental Engineering, National University of Singapore}

\begin{keyword}                           
Data privacy; parameter privacy; mutual information; particle filter.               
\end{keyword}

\begin{abstract}                          
This paper investigates parameter-privacy-preserving data sharing in continuous-state dynamical systems, where a data owner designs a data-sharing policy to support downstream estimation and control while preventing adversarial inference of a sensitive parameter. This data-sharing problem is formulated as an optimization problem that trades off privacy leakage and the impact of data sharing on the data owner’s utility, subject to a data-usability constraint. 
We show that this problem admits an equivalent belief Markov decision process (MDP) formulation, which provides a simplified representation of the optimal policy. 
To efficiently characterize information-theoretic privacy leakage in continuous state and action spaces, we propose a particle-belief MDP formulation that tracks the parameter posterior via sequential Monte Carlo, yielding a tractable belief-state approximation that converges asymptotically as the number of particles increases. 
We further derive a tractable closed-form upper bound on particle-based MI via Gaussian mixture approximations, which enables efficient optimization of the particle-belief MDP. Experiments on a mixed-autonomy platoon show that the learned continuous policy substantially impedes inference attacks on human-driving behavior parameters while maintaining data usability and system performance.

\end{abstract}

\end{frontmatter}

\section{Introduction}
In networked control systems such as intelligent transport and smart buildings, there is a growing need for data sharing among agents. For instance, trajectory data released by mobility platforms like Uber has been extensively used by authorities to improve traffic efficiency and safety~\cite{Wang.Yang2024,wang2026fosteringdatacollaborationdigital}. However, sharing operational data can expose sensitive information about system states and parameters. In particular, trajectory data may reveal users’ travel patterns (e.g., destinations) or operators’ business secrets (e.g., algorithmic parameters). 

Existing privacy-preserving data sharing mechanisms mainly focus on protecting individual-level records (e.g., states) through cryptographic techniques~\cite{lu2018privacy,shang2022single,wang2022transmission,li2023dynamic,zou2025recursive} and differential privacy~\cite{chen2023differential,chen2023optimization,wang2023differential,wang2024differentially,zhao2014achieving}. Little attention has been devoted to protecting sensitive system parameters~\cite{nekouei2021randomized,farokhi2019ensuring,weng2025optimal,ziemann2020parameter}, such as control gains and operational constraints. These parameters are particularly critical for agents such as mobility platforms, since the disclosure of these parameters can undermine business competitiveness and, in some cases, even expose system vulnerabilities to adversarial exploitation. 
It is worth noting that individual-level techniques such as differential privacy are designed to conceal data records within a population, which do not naturally extend to parameter privacy, as parameters lack such a population in which they can hide.

Several studies have explored parameter privacy-preserving data-sharing through information-theoretic formulations, where data are deliberately perturbed to limit the information an adversary can infer about sensitive parameters, with leakage measured by quantities such as mutual information (MI) or entropy. Representative approaches include stochastic kernels that maintain statistical dependencies with private parameters~\cite{bassi2018lossy}, Gaussian-based filters designed to mask the dynamics of linear Gaussian systems~\cite{ziemann2020parameter}, randomized parameter-generation methods that offer improved computational efficiency~\cite{nekouei2022model}, and dynamic programming formulations that achieve optimal utility-privacy trade-offs~\cite{weng2025optimal}.

Despite notable progress, current mechanisms for parameter privacy remain constrained by two fundamental issues. First, most existing mechanisms (e.g., \cite{weng2025optimal,erdemir2020privacy}) have only been implemented in low-dimensional settings with discretized states, often relying on exhaustive enumeration of past observations. Such formulations are inadequate in practice because (i) state variables such as speed, traffic flow, and voltages often evolve in continuous domains, and (ii) discretization induces an exponential complexity growth when evaluating MI, rendering the methods computationally prohibitive in higher dimensions. Second, many studies assume that the data provider’s operation is unaffected by its data-sharing mechanism, effectively ignoring the feedback loop induced by the use of shared data~\cite{kawano2021modular}. In practice, however, released data are consumed by other agents whose decisions reshape the provider’s operating environment. For example, the released trajectory data may inform congestion price strategies of traffic authorities, which subsequently affect the operations of mobility platforms.

\emph{Statement of Contribution}. This paper addresses these limitations by developing a particle-belief framework for parameter privacy-preserving data sharing. The proposed framework aims to solve an optimization problem that trades off (i) parameter privacy, measured by the MI between the parameter and the shared data, and (ii) the impact of the shared data on system utility, captured through the feedback loop induced by the use of shared data. Our contributions are threefold. First, we derive a belief Markov decision process (MDP) that provides a simplified, belief-space formulation of the data-sharing optimization problem, in which optimal policies are history-independent. We theoretically establish its equivalence to the original optimization problem.
Second, to overcome the intractability of belief updates in continuous state spaces, we propose a particle-belief formulation that approximates the exact belief distribution using a set of particles. We establish the convergence of the Bellman value induced by the particle belief to the value function of the original belief MDP as the number of particles increases. Third, to address the computational burden associated with continuous data-sharing policies, we design an efficient approximation framework that provides a tractable upper bound on MI via Gaussian mixture model approximations.

A preliminary version of this paper was presented at the 7th Annual Learning for Dynamics \& Control Conference~\cite{yu2025interaction}. 
This journal version broadens the problem setting, provides rigorous analysis of the belief MDP, extends the framework to continuous data sharing via particle filtering and Gaussian mixture approximations, establishes theoretical guarantees for the particle-belief formulation, and presents expanded simulation studies.

\section{Problem Statement}

\label{sec:Problem Statement}

Consider a data provider modeled as a discrete-time Markov process with state $X_t  \in \mathsf{X} \subset \mathbb{R}^{N}$ and control action $U_t \in \mathbb{R}^{M}$, over the horizon $\mathcal{T} = \{0,1,\cdots, T\}$ with sampling interval $\Delta T$. The system evolution is governed by a transition kernel $ p(X_{t+1}\mid \Theta,X_t,W_t)$ that combines the system dynamics $p(X_{t+1} \mid X_t, U_t)$ and a control policy $p(U_t \mid \Theta, X_t, W_t)$, where the sensitive parameter $\Theta \in \mathcal S_{\Theta}\subset \mathbb{R}^{P}$ is treated as a time-invariant \emph{secret} (e.g., control gains), and $W_t$ is external input from the environment. To enable data sharing while protecting $\Theta$, the data provider releases a distorted output $Y_t$ instead of $X_t$, following a data-sharing policy $\pi_t \in \Pi$.  

For clarity, we impose the following assumptions on the external input and adversary’s knowledge.

\begin{assumption}[Markovian external input] \label{asm:external}
    The external input $W_t$ follows a Markov process with transition $p(W_{t+1}\mid W_t, Y_t)$, which may be unknown to the data provider. 
\end{assumption}
Assumption~\ref{asm:external} allows the transition to depend on the shared data $Y_t$. This is reasonable, since $Y_t$ may be exploited by other parties in the ecosystem (e.g., traffic authorities using mobility data to optimize signal timings).

\begin{assumption}[Adversary's knowledge] \label{asm:adversary}
    The adversary is assumed to know (i) all conditional probabilities represented by $p(\cdot \mid \cdot)$, (ii) the prior distribution $p(\Theta)$ of the sensitive parameter $\Theta$ but not its realization, and (iii) the data-sharing policy $\pi_t$ at any time $t$. 
\end{assumption}
We make two remarks. First, (i) and (iii) reflect a conservative worst-case stance, i.e., privacy guarantees should not depend on obscuring the conditional model \cite{shannon1949communication}. Relaxing this assumption only strengthens privacy. Second, prior knowledge of $\Theta$ is reasonable, as adversaries often know plausible ranges and commonly used values of $\Theta$.

We design a data-sharing policy $\pi_t$ at any time $t$ that satisfies the following criteria. First, the policy must protect the sensitive parameter $\Theta$ by preventing accurate Bayesian inference of its true value from the released data over time. Second, the policy should preserve control performance, ensuring that the distorted data $Y_t$ does not significantly degrade system operations. Note that $Y_t$ can influence system performance by negatively affecting the future states $X_{>t}$ via external input $W_t$ (see Assumption~\ref{asm:external}). Third, the shared data should retain data usability, which requires bounding the deviation between the distorted data $Y_t$ and the true state $X_t$.

To meet these criteria, the design of the data-sharing policy is formulated as  
\begin{subequations}\label{eq: original_optimization}
\begin{align}
\min_{\boldsymbol{\pi}} ~
& \rho \, I^{\boldsymbol{\pi}}\!\left(\Theta; Y_{1:T}, W_{1:T}\right) + \sum_{t=1}^T \mathbb{E}^{\boldsymbol{\pi}}\!\left[ r\!\left(\theta^*, X_{t:t+1}, Y_t\right) \right],
\label{eq:obj_org} \\[1ex]
\text{s.t.} ~
& \mathbb{E}^{\pi_t} \!\left[ d(X_t, Y_t) \right] \leq \hat{D},
\quad t = 1, 2, \dots, T,
\label{eq:con_org}
\end{align}
\end{subequations}
where $\boldsymbol{\pi} = \{\pi_t\}_{t=1}^T$. The objective in \eqref{eq:obj_org} integrates a privacy metric and system cost with a weighting coefficient~$\rho$. Privacy is quantified by the MI between $\Theta$ and the released data sequence and external input pair $(Y_{1:T},W_{1:T})$, capturing information leakage about $\Theta$. 
The system cost $r(\theta^*, X_{t:t+1}, Y_t)$ is a continuous function in all its arguments and captures the effect of publishing $Y_t$ when transitioning from state $X_t$ to $X_{t+1}$, given the true parameter $\theta^*$. Constraint \eqref{eq:con_org} enforces a prescribed upper bound $\hat{D}$ on the expected distortion $d(X_t, Y_t)$  under policy $\pi_t$, where $d(\cdot,\cdot)$ is also continuous.

We simplify the calculation of $I(\Theta;Y_{1:T}, W_{1:T})$ using the chain rule of MI (see Appendix~\ref{appendix: MI simplification}) as 
\begin{align}
    I^{\boldsymbol{\pi}}(\Theta;Y_{1:T},W_{1:T}) = \sum_{t=1}^{T}I^{\boldsymbol{\pi}}(\Theta; Y_t \mid Y_{1:t-1}, W_{1:t}). \label{eq:MI2}
\end{align}

\section{Belief MDP Formulation}
\label{sec: A Computationally Efficient Privacy-Preserving Policy}
 
In this section, we derive a belief MDP formulation equivalent to the data-sharing optimization problem~\eqref{eq: original_optimization}. 
In the most general form $\pi_t(Y_t \mid \Theta, X_{1:t}, Y_{1:t-1}, W_{1:t})$, the policy $\pi_t$ generates distorted output $Y_t$ as a function of the parameter $\Theta$, the state sequence $X_{1:t}$, the external input sequence $W_{1:t}$, and the history of shared data $Y_{1:t-1}$. While this representation is optimal in the sense that it conditions on all available information up to time $t$, solving~\eqref{eq: original_optimization} with this policy becomes computationally intractable as $t$ increases due to growing dimensionality. To address this issue, we introduce an equivalent simplified representation of the optimal policy. 

We first establish in Theorem~\ref{th:simplified policy Theorem} that the optimal solution to problem~\eqref{eq: original_optimization} can be restricted to policies of the form $\pi^s_t(Y_t \mid \Theta, X_t, W_{1:t}, Y_{1:t-1}) \in \boldsymbol{\Pi^s}$, which no longer require conditioning on the full state history $X_{1:t-1}$. See Appendix~\ref{appendix:simplified policy Theorem} for proof.

\begin{thm}[Optimality of Simplified Policies] \label{th:simplified policy Theorem}
There exists a sequence of simplified policies $\boldsymbol{\pi}^{s,*} = \{\pi^{s,*}_t\}_{t=1}^T \in \boldsymbol{\Pi}^s$ that achieves optimality for problem~\eqref{eq: original_optimization}. In particular, $\boldsymbol{\pi}^{s,*}$ satisfies constraint~\eqref{eq:con_org} and the corresponding objective value $L(\boldsymbol{\pi^{s,*}})$ coincides with that of the optimal policy sequence $\boldsymbol{\pi^*}$, i.e., $L(\boldsymbol{\pi^{s,*}})=L(\boldsymbol{\pi^*})$. 
\end{thm}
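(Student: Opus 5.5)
Given an optimal general policy $\boldsymbol{\pi}^*$ with $\pi^*_t(Y_t\mid \Theta, X_{1:t}, Y_{1:t-1}, W_{1:t})$, I will construct a simplified policy by marginalizing out the past states $X_{1:t-1}$ under the joint law that $\boldsymbol{\pi}^*$ induces:
\begin{align*}
\pi^{s}_t(y_t\mid \theta, x_t, w_{1:t}, y_{1:t-1}) = \mathbb{E}^{\boldsymbol{\pi}^*}\big[\pi^*_t(y_t\mid \theta, X_{1:t-1}, x_t, y_{1:t-1}, w_{1:t}) \,\big|\, \Theta=\theta, X_t=x_t, Y_{1:t-1}=y_{1:t-1}, W_{1:t}=w_{1:t}\big].
\end{align*}
The aim is to show by induction on $t$ that the joint law of $(\Theta, X_t, X_{t+1}, Y_{1:t}, W_{1:t+1})$ is the same under $\boldsymbol{\pi}^s$ and under $\boldsymbol{\pi}^*$. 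Once that holds, the objective and the constraints must coincide, as the next paragraph explains.

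\textbf{Consequences of the marginal equivalence.}
\begin{itemize}
\item The privacy term. By the chain rule decomposition~\eqref{eq:MI2}, each summand $I(\Theta; Y_t\mid Y_{1:t-1},W_{1:t})$ depends only on the joint law of $(\Theta, Y_{1:t}, W_{1:t})$.
\item The cost term. $\mathbb{E}[r(\theta^*,X_{t:t+1},Y_t)]$ depends only on the law of $(X_t,X_{t+1},Y_t)$ given $\Theta=\theta^*$.
\item The distortion constraint~\eqref{eq:con_org}. It depends only on the law of $(X_t,Y_t)$.
\end{itemize}
Hence $L(\boldsymbol{\pi}^s)=L(\boldsymbol{\pi}^*)$, $\boldsymbol{\pi}^s$ is feasible, and $\boldsymbol{\pi}^s\in\boldsymbol{\Pi}^s$ is therefore optimal.

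\textbf{Induction.} Write $q_t(\theta,x_t,y_{1:t-1},w_{1:t})$ for the joint density of $(\Theta,X_t,Y_{1:t-1},W_{1:t})$.
\begin{itemize}
\item Base case. At $t=1$ the joint law of $(\Theta,X_1,W_1)$ does not depend on the policy.
\item Step 1: appending $Y_t$. Suppose $q_t$ agrees under both policies. By construction of $\pi^s_t$,
\begin{align*}
q_t\,\pi^s_t = \int p^{\boldsymbol{\pi}^*}(\theta, x_{1:t}, y_{1:t-1}, w_{1:t})\,\pi^*_t\, dx_{1:t-1}.
\end{align*}
So the law of $(\Theta,X_t,Y_{1:t},W_{1:t})$ agrees under both policies.
\item Step 2: appending $X_{t+1}$ and $W_{t+1}$. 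The next state is generated by $p(X_{t+1}\mid\Theta,X_t,W_t)$. The next input is generated by $p(W_{t+1}\mid W_t,Y_t)$ from Assumption~\ref{asm:external}. Given the current variables, neither kernel depends on $X_{1:t-1}$ or on the policy. Multiplying by these kernels and marginalizing therefore yields agreement of the law of $(\Theta,X_t,X_{t+1},Y_{1:t},W_{1:t+1})$, and hence of $q_{t+1}$.
\end{itemize}

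\textbf{Expected obstacle.} The step needing most care is verifying that $X_{t+1}$ and $W_{t+1}$ are conditionally independent of $X_{1:t-1}$ given $(\Theta,X_t,W_{1:t},Y_{1:t})$ under $\boldsymbol{\pi}^*$. The concern is that $Y_t$ was generated using $X_{1:t-1}$, which could in principle create a dependence. It does not, because conditioning on the realized $Y_t$ does not change the Markov kernels. I will verify this by writing out the full factorization of the joint density as a product of kernels and integrating out $x_{1:t-1}$.

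A secondary issue is existence of $\boldsymbol{\pi}^*$. If the minimum is not attained, the same construction applied to any $\epsilon$-optimal policy gives a simplified policy with the same value. The infimum over $\boldsymbol{\Pi}^s$ therefore equals the infimum over all policies, and attainment transfers whenever the general minimizer exists. I will assume existence as the statement does.
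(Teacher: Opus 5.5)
Your proposal is correct and follows essentially the same route as the paper. The paper also defines $\pi^s_t$ as the conditional law of $Y_t$ given $(\Theta,X_t,W_{1:t},Y_{1:t-1})$ induced by the general policy, proves by induction (factoring through $p(X_{t+1}\mid\Theta,X_t,W_t)$ and $p(W_t\mid W_{t-1},Y_{t-1})$) that the joint law of $(\Theta,X_{t:t+1},W_{1:t},Y_{1:t})$ is preserved, and then reads off equality of the MI, cost and distortion terms from the matched marginals; your explicit handling of the conditional independence of $(X_{t+1},W_{t+1})$ from $X_{1:t-1}$ and of the existence of $\boldsymbol{\pi}^*$ is slightly more careful than the paper, which takes both for granted.
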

 
By Theorem~\ref{th:simplified policy Theorem}, the search for the optimal policy can be restricted to the simplified class $\boldsymbol{\Pi^s}$. We therefore reformulate \eqref{eq: original_optimization} as  
\begin{align}
    &\min_{\boldsymbol{\pi^s} \in \boldsymbol{\Pi^s}}\sum_{t=1}^T \Bigg(\rho I^{\boldsymbol{\pi^s}}(\Theta; Y_t\mid Y_{1:t-1},W_{1:t})\notag\\& + \mathbb{E}^{\boldsymbol{\pi^s}}[r(\theta^*,X_{t:t+1},Y_t)] + \lambda\Big(\mathbb{E}^{\boldsymbol{\pi^s}}[d(X_t,Y_t)]-\hat{D}\Big)\Bigg)\label{eq:simplified_policy_optimization}
\end{align}
where the Lagrangian multiplier $\lambda$ is  associated with constraint~\eqref{eq:con_org}. We solve \eqref{eq:simplified_policy_optimization} using dynamic programming, where given any history $h_{t} = (w_{1:t}, y_{1:t-1})$, the cost-to-go function $V_t(h_t)$ is defined as the minimum expected cumulative cost from time $t$ onward: 
\begin{align}\label{eq:cost_to_go}
    V_t^{*}(h_{t}) = &\min_{\{\pi_k^s\}_{k=t}^T}  \Bigg[ \sum_{k=t}^T \Bigg( \rho I^{\boldsymbol{\pi^s}}(\Theta; Y_k \mid Y_{1:k-1}, W_{1:k})\notag\\ &~+ \mathbb{E}^{\boldsymbol{\pi^s}}[r( X_{k:k+1}, Y_k, \theta^*)]  \notag\\&~+ \lambda\Big(\mathbb{E}^{\boldsymbol{\pi^s}}[d(X_k,Y_k)]-\hat{D}\Big)\Bigg)\Bigm| h_{t} \Bigg]
\end{align}
with terminal condition $V^{*}_{T+1}(\cdot) = 0$. Thus, we can write the Bellman optimality equation as 
\begin{align}
\label{eq:Bellman original}
    V^{*}_t(h_{t}) = \min_{\pi^s_t(Y_t\mid\Theta,X_t,h_t)} &[C_t(h_{t},\pi^s_t) +\mathbb{E}(V^{*}_{t+1}(h_{t+1}\mid h_{t}))],
\end{align}
where $C_t(h_{t}, \pi^s_t)$ denotes the stage cost under policy $\pi^s_t$. See Appendix~\ref{appendix:proof for bellman equiv} for its specific expression.

Nevertheless, the simplified policy $\boldsymbol{\pi}^s$ still relies on the history $h_{t}$, causing the computational complexity to grow rapidly as $T$ increases. To address this issue, we reformulate the problem using a belief-state representation. Specifically, we map the history $h_t$ to a belief distribution $\beta_{t}(\Theta, X_t) = p(\Theta, X_t \mid h_{t})$, which compactly summarizes all past information as a probability distribution over the current states and the sensitive parameter. The belief state updates following Bayes' rule upon observing new information (i.e., the distorted data $y_t$ and external input $W_t$), written as
\begin{align}
\label{eq:belief_update}
     \beta_{t+1}(\Theta, X_{t+1}) \propto &\int_{x_t}p(X_{t+1}\mid \Theta,x_t,W_t)a(y_t\mid\Theta,x_t)\notag \\ 
     & ~\cdot\beta_{t}(\Theta, x_{t})dx_t
\end{align}
where $a(y_t\mid \Theta,x_t)= \pi^s_t(Y_{t}\mid \Theta,x_t,W_{1:t},Y_{1:t-1})$ denotes the policy. See Appendix~\ref{appendix:update for belief} for detailed derivation. 

From~\eqref{eq:belief_update}, it follows that $\beta_t$ depends only on the current observation $y_t$, the policy $\pi^s_t$, and the previous belief state $\beta_{t-1}$. Based on this, Lemma~\ref{le:bellman equiv} shows that the optimal cost-to-go function expressed in terms of the belief state $\beta_t$ is equivalent to that obtained under the full history $h_t$ (see Appendix~\ref{appendix:proof for bellman equiv} for proof).  
Note that this conclusion build upon the idea in~\cite{weng2025optimal} and extends it to explicitly incorporate the effect of data-sharing on expected control performance  $\mathbb{E}^{\boldsymbol{\pi^s}}[r(\theta^*,X_{t:t+1},Y_t)]$.

\begin{lem}[Bellman Equivalence on Belief States]\label{le:bellman equiv}
At any time $t$, given the history $h_t$, the optimal cost-to-go function $V_t^*(h_t)$ can be expressed solely in terms of the belief state $\beta_t$ ,  i.e., $V_t^*(h_t) = V_t^*(\beta_t)$
\end{lem}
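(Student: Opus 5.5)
I will prove the lemma by backward induction on $t$, starting from the terminal condition $V^*_{T+1}(\cdot)=0$, which trivially depends on $h_{T+1}$ only through $\beta_{T+1}$. The inductive hypothesis is that $V^*_{t+1}(h_{t+1})=V^*_{t+1}(\beta_{t+1})$ for every history $h_{t+1}$. The inductive step then has two parts. First, I show that the stage cost $C_t(h_t,\pi^s_t)$ in the Bellman equation~\eqref{eq:Bellman original} depends on $h_t$ only through $\beta_t$. Second, I show that the conditional law of the successor belief $\beta_{t+1}$ given $h_t$ and the chosen action $\pi^s_t$ is determined by $(\beta_t,\pi^s_t)$. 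Together these show that the minimand in~\eqref{eq:Bellman original} is a functional of $(\beta_t,\pi^s_t)$ only, so the minimum over $\pi^s_t$ is a function of $\beta_t$.

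\textbf{Stage cost.} I split $C_t$ into its three terms.
\begin{itemize}
\item For the distortion term, $\mathbb{E}[d(X_t,Y_t)\mid h_t]=\int d(x,y)\,a(y\mid\theta,x)\,\beta_t(\theta,x)\,d\theta\,dx\,dy$, which clearly depends only on $\beta_t$ and the action.
\item For the privacy term, I use that $W_{t}$ is already included in $h_t$. Then $I(\Theta;Y_t\mid Y_{1:t-1}=y_{1:t-1},W_{1:t}=w_{1:t})$ is computed from the joint $p(\theta,y_t\mid h_t)=\int a(y_t\mid\theta,x)\beta_t(\theta,x)\,dx$. The needed marginals $p(\theta\mid h_t)$ and $p(y_t\mid h_t)$ are both obtained from this joint, so the conditional MI at $h_t$ is a functional of $\beta_t$ and $a$.
\item For the control cost, $\mathbb{E}[r(\theta^*,X_{t:t+1},Y_t)\mid h_t]$ requires the joint law of $(X_t,X_{t+1},Y_t)$ given $h_t$. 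This equals $\int p(x_{t+1}\mid\theta,x_t,w_t)\,a(y_t\mid\theta,x_t)\,\beta_t(\theta,x_t)\,d\theta$, using the Markov transition kernel and the fact that $w_t\in h_t$.
\end{itemize}

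\textbf{Transition.} The update~\eqref{eq:belief_update} shows that $\beta_{t+1}$ is a deterministic function $F(\beta_t,a,y_t,w_t,w_{t+1})$, up to normalization. The next-step randomness $(Y_t,W_{t+1})$ given $h_t$ has law $p(y_t\mid h_t)\,p(w_{t+1}\mid w_t,y_t)$. This uses Assumption~\ref{asm:external}: $W_{t+1}$ is conditionally independent of $(\Theta,X_t)$ and of the older history given $(W_t,Y_t)$. Hence $\mathbb{E}[V^*_{t+1}(\beta_{t+1})\mid h_t]$ is a functional of $(\beta_t,a)$, and the induction closes.

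\textbf{Main obstacle.} I expect two subtleties.
\begin{itemize}
\item The current external input $w_t$ enters both the update and the cost, not just $\beta_t$. I will handle this either by including $w_t$ in the information state, i.e.\ reading $\beta_t$ as $p(\Theta,X_t\mid h_t)$ together with the observed $w_t$, or by folding the conditioning on $w_t$ into $\beta_t$ as the paper's notation does. Relatedly, the unknown transition $p(W_{t+1}\mid W_t,Y_t)$ must be treated as a fixed kernel.
\item The control cost depends on $\theta^*$ rather than on $\Theta$ integrated against the belief. I will treat $\theta^*$ as a fixed known constant of the data owner, so it is not a function of history and does not break the argument.
\end{itemize}
Finally, I will verify that conditioning on the event $\{Y_{1:t-1}=y_{1:t-1},W_{1:t}=w_{1:t}\}$ makes the MI term pointwise in $h_t$. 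Its expectation over histories then reproduces~\eqref{eq:MI2}, which justifies the Bellman decomposition.
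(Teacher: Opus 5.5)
Your proposal is correct and follows essentially the same route as the paper's proof. Both argue by backward induction from $V^*_{T+1}\equiv 0$, show that the stage cost $C_t$ (MI, system cost, distortion) is a functional of $\beta_t$ and the kernel, and express the expected future value through the belief-update map $\beta_{t+1}=\Phi(\beta_t,\pi^s_t,y_t)$ integrated against $p(y_t\mid h_t)=\int \pi^s_t\,\beta_t$. You are in fact more careful than the paper, which skips the average over $W_{t+1}\sim p(\cdot\mid w_t,y_t)$ and silently absorbs the dependence on $w_t$ (entering through $p(x_{t+1}\mid\theta,x_t,w_t)$ and the $W$-transition) into $\beta_t$; your first option of carrying $w_t$ alongside $\beta_t$ in the information state is the accurate form of the claim.
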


It follows from Lemma~\ref{le:bellman equiv} that given $h_t$, the data-sharing policy $\pi^s_t(Y_t \mid \Theta, X_t, h_t)$ induces a Markov kernel $\mathcal{K}_t$ that maps the state pair $(\Theta, X_t)$ to a probability distribution over $Y_t$. For arbitrary measurable spaces of $(\Theta, X_t)$ and $Y_t$, this kernel $\mathcal{K}_t$ specifies a conditional distribution of $Y_t$ under policy $\pi^s_t$. Based on this representation, we can replace $h_t$ with the belief state $\beta_t$ and the policy $\pi_t^s$ with its induced Markov kernel $\mathcal{K}_t$ in \eqref{eq:Bellman original}, thereby converting the Bellman optimality equation to 
\begin{align}\label{eq: Bellman kernel}
    V_t^*(\beta_t) = \min_{\mathcal{K}_t} \big[ C_t(\beta_t, \mathcal{K}_t) + \mathbb{E}[V_{t+1}^*(\beta_{t+1}) \mid h_t] \big]. 
\end{align}
By \eqref{eq: Bellman kernel}, it follows that the optimal Markov kernel $\mathcal{K}^*_t = \arg \min V_t^*(\beta_t)$ is a measurable mapping $\beta\mapsto \mathcal K^*_t(\cdot\mid \cdot,\cdot,\beta)$; given the realized belief $\beta_t$, we use $\mathcal K^*_t(\cdot\mid \cdot,\cdot,\beta_t)$. Thus, we show in Theorem~\ref{th:optimal policy} that the data-sharing policy is uniquely specified by $\beta_t$, $\Theta$, and $X_t$.

\begin{thm}[Optimal Policy Equivalence]\label{th:optimal policy}
For any history $h_t$ with belief $\beta_t$ on $\mathcal S_\Theta\times\mathsf X$. Then there exists an optimal Markov kernel
$\mathcal K_t^*(\cdot\mid \theta,x,\beta)$ such that the optimal data-sharing policy satisfies
\begin{align}\label{eq:pi-opt-kernel}
\pi_t^{s,*}(Y_t\mid \Theta,X_t,h_t)
=
\mathcal K_t^*(Y_t \mid \Theta,X_t,\beta_t),
\end{align}
for all $(\theta,x)\in \mathcal S_{\Theta}\times\mathsf X$. 
Moreover, for each fixed belief $\beta$, the kernel $\mathcal K_t^*(\cdot\mid\cdot,\cdot,\beta)$
attains the minimum in the belief-state Bellman equation \eqref{eq: Bellman kernel}.
\end{thm}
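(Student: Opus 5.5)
The plan is backward induction on $t$, built on Lemma~\ref{le:bellman equiv} together with the structure of the stage cost $C_t$ and the belief update~\eqref{eq:belief_update}. The first step is to show that, for a fixed history $h_t$, the two ingredients of the Bellman right-hand side in~\eqref{eq:Bellman original} depend on $(h_t,\pi^s_t)$ only through $\beta_t$ and through the kernel $a(\cdot\mid\theta,x)=\pi^s_t(\cdot\mid\theta,x,h_t)$. These ingredients are the stage cost $C_t(h_t,\pi^s_t)$ and the conditional law of $(Y_t,W_{t+1},\beta_{t+1})$ given $h_t$.

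For the stage cost, I will write out the three pieces:
\begin{itemize}
\item the conditional MI $I(\Theta;Y_t\mid h_t)$, computed from the joint density $\beta_t(\theta,x)\,a(y\mid\theta,x)$;
\item the expected distortion $\int d(x,y)\,a(y\mid\theta,x)\,\beta_t(\theta,x)$;
\item the expected system cost, which integrates $r$ against $\beta_t\,a\,p(X_{t+1}\mid\cdot)$.
\end{itemize}
The system cost also involves $p(W_{t+1}\mid W_t,Y_t)$. Since $W_t$ is part of $h_t$, I will treat it as carried along with the belief, exactly as Lemma~\ref{le:bellman equiv} implicitly does. Each piece is then a functional $C_t(\beta_t,\mathcal K_t)$.

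For the next belief, \eqref{eq:belief_update} gives $\beta_{t+1}$ as a deterministic function $F(\beta_t,\mathcal K_t,y_t,W_{t+1})$. The predictive law of $(y_t,W_{t+1})$ is likewise determined by $\beta_t$ and $\mathcal K_t$. By Lemma~\ref{le:bellman equiv}, $V^*_{t+1}(h_{t+1})=V^*_{t+1}(\beta_{t+1})$. Hence the Bellman objective equals a functional $J_t(\beta_t,\mathcal K_t)$, which recovers~\eqref{eq: Bellman kernel}.

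Next I will argue that a minimizer exists and can be selected measurably in $\beta$. For each $\beta$, I will minimize $J_t(\beta,\cdot)$ over Markov kernels $\mathcal S_\Theta\times\mathsf X\to\mathcal P(\mathbb R^N)$. The tools are:
\begin{itemize}
\item continuity of $r$ and $d$;
\item lower semicontinuity of MI under weak convergence;
\item tightness of feasible kernels, which the distortion term in the Lagrangian forces. If $d$ is coercive, bounded expected distortion makes the $Y$-marginals tight.
\end{itemize}
Tightness plus lower semicontinuity give existence of a minimizer by Prokhorov. Then I will invoke a measurable selection theorem (Kuratowski--Ryll-Nardzewski or Brown--Purves) to obtain a Borel map $\beta\mapsto\mathcal K^*_t(\cdot\mid\cdot,\cdot,\beta)$. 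This needs $J_t$ jointly measurable and lower semicontinuous in the kernel argument. Lower semicontinuity of $V^*_{t+1}$ comes from the induction hypothesis, and I will assume it holds as part of that hypothesis.

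Finally I will define $\pi^{s,*}_t(\cdot\mid\theta,x,h_t):=\mathcal K^*_t(\cdot\mid\theta,x,\beta_t(h_t))$ and verify optimality. Any policy $\pi^s_t$ at $h_t$ induces some kernel $\mathcal K_t$, so its Bellman value at $h_t$ is $J_t(\beta_t,\mathcal K_t)\ge J_t(\beta_t,\mathcal K^*_t)=V^*_t(\beta_t)=V^*_t(h_t)$. Equality is attained by $\pi^{s,*}_t$, giving~\eqref{eq:pi-opt-kernel}. Running the induction from $T$ down to $1$, with $V^*_{T+1}=0$, and combining with Theorem~\ref{th:simplified policy Theorem} shows that the resulting sequence is optimal for~\eqref{eq: original_optimization}.

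I expect the main obstacle to be the existence and measurable-selection step: the compactness of the kernel space and the semicontinuity of the MI and value terms in continuous spaces. If that fails in full generality, I would fall back to $\epsilon$-optimal measurable selectors, or take the minimum as attained as the paper implicitly does in~\eqref{eq: Bellman kernel}.
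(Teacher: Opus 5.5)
Your overall route matches the paper's. The paper gives no separate proof of Theorem~\ref{th:optimal policy}. It passes from \eqref{eq:Bellman original} to \eqref{eq: Bellman kernel} via Lemma~\ref{le:bellman equiv} and then simply asserts that the minimizing kernel exists and depends measurably on $\beta$. Three of your steps reproduce this: reducing the Bellman right-hand side to $J_t(\beta_t,\mathcal K_t)$, defining $\pi^{s,*}_t(\cdot\mid\theta,x,h_t):=\mathcal K^*_t(\cdot\mid\theta,x,\beta_t(h_t))$, and the comparison $J_t(\beta_t,\mathcal K_t)\ge J_t(\beta_t,\mathcal K^*_t)$. You are also right that $w_t$ enters $C_t$ (through $p(x_{t+1}\mid\theta^*,x_t,w_t)$) and the law of $(\beta_{t+1},W_{t+1})$. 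Strictly, then, the kernel is indexed by $(\beta_t,w_t)$, and the paper's $V^*_t(\beta_t)$ has to be read the same way.

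The step you add beyond the paper has a concrete hole: existence of a minimizer via Prokhorov plus lower semicontinuity. Lower semicontinuity of the MI, distortion and system-cost terms under weak convergence of $\beta\otimes\mathcal K_n$ is fine. But lower semicontinuity of $\mathcal K\mapsto\int V^*_{t+1}(F(\beta,\mathcal K,y))\,p_{\mathcal K}(dy)$ does \emph{not} follow from lower semicontinuity of $V^*_{t+1}$. The reason is that the posterior map $F$ is not continuous in $\mathcal K$ under weak convergence of joint laws. Here is a counterexample:
\begin{itemize}
\item Take $\mathcal S_\Theta=\{\theta^1,\theta^2\}$ with a uniform prior and densities $\varphi(y)(1\pm c\sin(ny))$, $0<c<1$, where $\varphi$ is the standard normal density.
\item These converge weakly to an uninformative kernel, all of whose posteriors equal the prior.
\item Yet for every $n$ the posterior $\tfrac12(1+c\sin(ny))$ stays spread out.
\item So for a continuous strictly concave $V$ (e.g.\ the entropy of the $\Theta$-marginal), $\liminf_n\mathbb E_{\mathcal K_n}[V]<\mathbb E_{\mathcal K}[V]$.
\end{itemize}

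For the same reason you cannot re-derive lower semicontinuity of $V^*_t$ for the next induction step. That would need a Berge-type argument with joint lower semicontinuity in $(\beta,\mathcal K)$, which fails here. Over general Markov kernels your proof therefore collapses to your fallback of taking attainment and measurable selection as given. That is exactly the paper's level of rigor.

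To actually prove existence, work in the parametrized class of Assumption~\ref{assum:regularity}(ii)--(iv):
\begin{itemize}
\item There $u$ ranges over the compact set $\mathcal U_t$.
\item The posterior's density with respect to $\beta$, $k_t(y\mid\theta,x,\beta;u)/p_Y(y;\beta,u)$, varies continuously in $u$ uniformly in $(\theta,x)$ by (iii)--(iv), so posteriors move continuously in total variation.
\item Dominated convergence with envelope $\overline k_t$ then gives (semi)continuity in $u$, provided $V^*_{t+1}$ is bounded and suitably continuous.
\item Compactness gives attainment, and a standard measurable-selection theorem gives a measurable argmin.
\end{itemize}

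Finally, your closing claim of optimality for \eqref{eq: original_optimization} also needs $\lambda$ to be an optimal multiplier for the per-step constraints \eqref{eq:con_org}. Neither you nor the paper shows this, and it is not part of the theorem.
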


Theorem~\ref{th:optimal policy} shows that the optimal data-sharing policy is completely determined by the current state $(\Theta, X_t)$ and the belief state $\beta_t$. Therefore, to avoid confusion, we write $\mathcal K_t^*(Y_t \mid \Theta,X_t,\beta_t)$ in place of $\pi^{s,*}_t(Y_t\mid \Theta, X_t,h_t)$ throughout the remainder of the paper.

In existing research, state $X_t$ and policy $\mathcal K_t^*(Y_t \mid \Theta,X_t,\beta_t)$ are modeled in discrete spaces, restricting the state representation and the shared data to predefined discrete sets. Extending the framework to continuous state and action spaces provides a more accurate representation of system dynamics and enables smoother, more adaptive data-sharing strategies.

\section{Particle-Belief Formulation and Upper Bound of MI in Continuous Spaces}
\label{sec: Particle-Belief Formulation}

Solving the optimal data-sharing policy can be challenging, especially in settings with continuous state and action spaces. To handle the continuous state space, we employ a PF to track the evolution of the belief state $\beta_t$ and accordingly reformulate the Bellman optimality equation~\eqref{eq: Bellman kernel}. To handle the continuous action space, we adopt a Gaussian data-sharing policy to generate the shared data. Since the exact computation of MI in continuous spaces is generally intractable, we develop an efficient method to estimate an upper bound on the MI by combining particle-based beliefs with the Gaussian policy. We prove that the Bellman value error induced by particle beliefs uniformly converges in probability to zero. We further derive a tractable MI estimator whose gap is bounded by cluster similarity/separation parameters (KL-bound $\kappa$, Bhattacharyya $\gamma$).

\subsection{Particle formulation of belief state}\label{subsec:PF}

The belief state $\beta_t$, defined in~\eqref{eq:belief_update} as the posterior distribution of $(\Theta, X_t)$, is generally intractable to compute in continuous spaces. To address this challenge, we employ a PF to approximate $\beta_t$  by an empirical probability measure $\hat{\beta}_t^{(N)}$ supported on a finite set of weighted particles $\{(\theta_i,x_{i,t})\}_{i=1}^N$, with associated normalized nonnegative weights $\{\omega_{i,t}\}_{i=1}^N$ satisfying $\sum_{i=1}^N\omega_{i,t}=1$: 
\begin{align}
\hat{\beta}_t^{(N)}(d\theta,dx) =\sum_{i=1}^N \omega_{i,t}\delta_{(\theta_i,x_{i,t})}(d\theta,dx),
\label{eq:particle_measure_def}
\end{align}
where $\delta_{(\theta,x)}$ denotes the Dirac measure (unit point mass) at $(\theta,x)$. Note that all expectations under $\hat\beta_t^{(N)}$ reduce to weighted sums, i.e., for any bounded measurable test function $\varphi$, 
\begin{align}
\int \varphi(\theta,x)\,\hat\beta_t^{(N)}(d\theta,dx)
~=~\sum_{i=1}^N \omega_{i,t}\,\varphi(\theta_i,x_{i,t}).
\label{eq:particle_integral_sum}
\end{align}
Unlike grid-based discretization on a fixed mesh, a PF uses adaptive weighted samples in continuous space, focusing on high-posterior regions to avoid dimensionality explosion and discretization bias. Particles propagate according to the system dynamics. Their weights are updated based on consistency with the latest observations, as formalized in Lemma~\ref{lem: weight_update} (see Appendix~\ref{appendix:weight_update_derivation} for proof). 
\begin{lem}[Recursive Weight Updates]\label{lem: weight_update} Given policy $\mathcal{K}_t(y_t\mid\theta_i, x_{i,t},\beta_t)$ and the shared data $y_t$, the weight $\omega_{i,t}$ of particle $i$ is updated recursively by $
    \omega_{i,t}\propto \omega_{i,t-1}\mathcal{K}(y_t\mid\theta_i, x_{i,t}, \beta_t)$, 
with $\sum_{i=1}^N \omega_{i,t} = 1 $.
\end{lem}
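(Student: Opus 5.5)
The plan is to derive the weight recursion directly from the belief update \eqref{eq:belief_update}, viewing the particle filter as sequential importance sampling with the state-transition kernel as proposal. The first step is to unroll \eqref{eq:belief_update} into a path-space posterior. For the trajectory $(\Theta, X_{1:t})$ given $h_{t+1}=(w_{1:t+1},y_{1:t})$, Bayes' rule together with the Markov structure (Assumption~\ref{asm:external} and the simplified policy form of Theorem~\ref{th:simplified policy Theorem}) gives
\begin{align*}
p(\theta, x_{1:t}\mid h_{t+1}) \propto p(\theta)\,p(x_1)\prod_{k=1}^{t} \mathcal K_k(y_k\mid \theta,x_k,\beta_k)\prod_{k=1}^{t-1}p(x_{k+1}\mid\theta,x_k,w_k).
\end{align*}
Terms involving $p(w_{k+1}\mid w_k,y_k)$ do not depend on $(\theta,x_{1:t})$, so they drop out after normalization. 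This is the same cancellation used in deriving \eqref{eq:belief_update}.

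The second step is to choose the proposal $q(\theta,x_{1:t}) = p(\theta)p(x_1)\prod_{k} p(x_{k+1}\mid\theta,x_k,w_k)$. This matches the statement that particles propagate according to the system dynamics: $\theta_i$ is drawn once from the prior and kept fixed, and $x_{i,k+1}\sim p(\cdot\mid\theta_i,x_{i,k},w_k)$. The unnormalized importance weight is then the ratio of target to proposal, namely $\prod_{k\le t}\mathcal K_k(y_k\mid\theta_i,x_{i,k},\beta_k)$. Since this is a product, peeling off its last factor gives $\tilde\omega_{i,t}=\tilde\omega_{i,t-1}\,\mathcal K_t(y_t\mid\theta_i,x_{i,t},\beta_t)$. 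Normalizing gives $\omega_{i,t}\propto\omega_{i,t-1}\mathcal K_t(y_t\mid\theta_i,x_{i,t},\beta_t)$ with $\sum_i\omega_{i,t}=1$, because the normalizer $\sum_j\tilde\omega_{j,t-1}$ is common to all particles and cancels. Equivalently, substituting the empirical measure \eqref{eq:particle_measure_def} into \eqref{eq:belief_update} and evaluating integrals via \eqref{eq:particle_integral_sum} shows the same thing: propagation moves each atom through the transition kernel, and the correction multiplies each atom's mass by the likelihood $a(y_t\mid\theta_i,x_{i,t})$.

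The main obstacle is bookkeeping the time indices and the dependence of the policy on $\beta_t$. First, \eqref{eq:belief_update} applies the likelihood of $y_t$ at $x_t$ before propagating to $X_{t+1}$. The lemma instead states the correction at time $t$ with the current particle $x_{i,t}$, so I will align the two by treating the weight $\omega_{i,t}$ as that of the filtered measure after incorporating $y_t$. Second, $\mathcal K_t$ depends on $\beta_t$, which itself depends on $y_{1:t-1}$. I will argue that, conditional on $h_t$, this dependence is a fixed known function shared by all particles, so it acts as a deterministic policy parameter in the likelihood and does not affect proportionality. If resampling occurs, the weights are reset to $1/N$ before the multiplication, and the recursion holds unchanged.
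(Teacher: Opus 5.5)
Your proposal is correct and follows essentially the same route as the paper's proof: sequential importance sampling with the state-transition kernel as the proposal, so that the transition and $p(W_{k+1}\mid W_k,Y_k)$ factors cancel and each weight is simply multiplied by the policy likelihood. Your path-space formulation, with its explicit time-index alignment and normalization, is a cleaner version of the same argument. The paper's derivation actually ends at $\omega_{i,t}\propto\omega_{i,t-1}\,p(W_t\mid W_{t-1},Y_{t-1})\,p(Y_{t-1}\mid\theta_i,X_{i,t-1},\dots)$. That is the predictive-weight convention, and it still carries a factor that is common to all particles; your bookkeeping resolves both loose ends.
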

Following the particle weight update, a conditionally unbiased resampling method (e.g., sequential importance resampling in~\cite{rubin1981bayesian}) is applied to mitigate degeneracy in which only a few particles dominate, and diversity is lost. Resampling is triggered when a weight-degeneracy score $D(\tilde{\omega}_{t})$ drops below a threshold (e.g., the effective sample size $\hat{N}_{\text{eff}} = \frac{1}{\sum_{i=1}^N (\tilde{\omega}_{i,t})^2}$). After weight updates, particles are then propagated according to system dynamics.

Lemma~\ref{lem:pf-consistency} summarizes the weak consistency of the particle belief. See \cite[Theorem~1]{crisan2002survey} (also \cite{chopin2004central,moral2004feynman}) for proof. Here, $\Rightarrow$ denotes weak convergence of probability measures, i.e., for any bounded continuous function $\varphi$, $\int \varphi(\theta,x)\,\hat\beta_t^{(N)}(d\theta,dx) \rightarrow\int \varphi(\theta,x)\,\beta_t(d\theta,dx)$. For notational convenience, we use $\beta_t$ to denote both the belief state and the corresponding probability measure.  
\begin{lem}[Particle-belief consistency]\label{lem:pf-consistency}
Fix $t$. Under standard regularity conditions for Sequential Monte Carlo (SMC) methods (i.e.,  Feller state transition and a strictly positive, bounded, and continuous data-sharing policy), the particle belief
$\hat\beta_t^{(N)}$ is weakly consistent, i.e.,
\begin{align}\label{eq:belief-weak-conv}
\hat\beta_t^{(N)} \Rightarrow \beta_t \qquad \text{a.s.}
\end{align} 
\end{lem}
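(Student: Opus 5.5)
The plan is to reduce the statement to the classical almost-sure convergence theorem for interacting particle filters (Crisan--Doucet, Theorem~1). We identify the belief recursion \eqref{eq:belief_update} with a Feynman--Kac flow. The mutation kernel is $K_t\big((\theta,x),d(\theta',x')\big)=\delta_\theta(d\theta')\,p(x'\mid\theta,x,W_t)\,dx'$. The potential is $g_t(\theta,x)=\mathcal K_t(y_t\mid\theta,x,\beta_t)$, evaluated at the realized $y_t$. With these, $\beta_{t+1}=\Psi_t(\beta_t)$, where
\begin{align*}
\Psi_t(\mu)(\varphi)=\frac{\int K_t\varphi\,(\theta,x)\,g_t(\theta,x)\,\mu(d\theta,dx)}{\int g_t\,d\mu}.
\end{align*}
The particle scheme of Lemma~\ref{lem: weight_update} (weighting, conditionally unbiased resampling, then propagation) is exactly the random perturbation $\hat\beta_{t+1}^{(N)}=\Psi_t^{(N)}(\hat\beta_t^{(N)})$ analysed there. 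We then proceed by induction on $t$. The base case is $\hat\beta_0^{(N)}$, formed from i.i.d. draws of the prior $p(\Theta,X_0)$. Its almost-sure weak convergence follows from Varadhan's theorem, i.e.\ the empirical measure of i.i.d. samples converges weakly a.s.

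For the inductive step, assume $\hat\beta_t^{(N)}\Rightarrow\beta_t$ a.s. The first task is continuity of $\Psi_t$ in the weak topology. Let $\varphi$ be bounded continuous. The Feller property makes $K_t\varphi$ bounded continuous, and $g_t$ is bounded and continuous by assumption, so the product $g_tK_t\varphi$ is bounded continuous. Hence both the numerator and the denominator converge. The denominator limit $\int g_t\,d\beta_t$ is strictly positive because $g_t>0$, so the ratio converges. The second task is to control the sampling errors (resampling and propagation). For this we use the standard conditional second-moment bound
\begin{align*}
\mathbb E\big[|(\hat\mu^{(N)}-\mu)(\varphi)|^2\mid\mathcal F\big]\le \|\varphi\|_\infty^2/N.
\end{align*}
Here $\mu$ is the conditional target given the $\sigma$-field $\mathcal F$ generated by the previous particles. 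To upgrade to almost-sure convergence, we take fourth moments, which gives $O(N^{-2})$, and apply Borel--Cantelli. This yields $(\hat\mu^{(N)}-\mu)(\varphi)\to0$ a.s. for each fixed $\varphi$.

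The main obstacle is passing from a.s. convergence for each fixed test function to the single almost-sure event on which $\hat\beta_t^{(N)}\Rightarrow\beta_t$, because the null sets may depend on $\varphi$. We resolve this by working on a Polish state space $\mathcal S_\Theta\times\mathsf X$, where there is a countable convergence-determining class $\{\varphi_k\}$ of bounded (Lipschitz) continuous functions. We take the countable union of the null sets, and on its complement convergence holds for every $\varphi_k$, hence weakly. A secondary subtlety is that $g_t$ depends on $\beta_t$ through the policy kernel. Since the lemma fixes $t$ and the realized $y_{1:t}$, and the potentials evaluate the given kernel, we treat $g_t$ as a fixed bounded, positive, continuous function along the recursion. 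Under that reading the Crisan--Doucet hypotheses are met verbatim, and we conclude \eqref{eq:belief-weak-conv}.
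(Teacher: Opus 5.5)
Your proposal is correct and takes the same route as the paper, which proves this lemma only by citing Crisan--Doucet's Theorem~1 (Feller mutation kernel; bounded, continuous, strictly positive potential). Your Feynman--Kac identification, weak continuity of $\Psi_t$, fourth-moment/Borel--Cantelli control of the sampling error, and countable convergence-determining class simply unpack that theorem's proof; one precision is that the conditional $\|\varphi\|_\infty^2/N$ bound is a hypothesis on the resampling scheme (multinomial, or one satisfying Crisan--Doucet's offspring-variance condition), not a consequence of conditional unbiasedness alone, a point the paper also leaves implicit.
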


By replacing the true belief $\beta_t$ with the particle belief $\hat\beta_t^{(N)}$ and
using the particle belief-update mapping, we obtain the following particle (surrogate)
Bellman optimality equation:
\begin{align}\label{eq:particle_bellman}
V_t^{*,(N)}(\hat\beta_t^{(N)})
= &\min_{\mathcal K_t\in\mathcal A_t}
\Big\{ C_t^{(N)}(\hat\beta_t^{(N)},\mathcal K_t)
\notag\\&\qquad + \mathbb E\!\big[ V_{t+1}^{*,(N)}(\hat\beta_{t+1}^{(N)}) \mid h_t,\mathcal K_t \big] \Big\}.
\end{align}
It is crucial to ensure that substituting $\hat{\beta}_t$ for the true belief $\beta_t$ in the Bellman equation does not distort the optimality principle. Therefore, we next formally establish the asymptotic convergence of the particle-belief-based Bellman optimality equation to its true form as the number of particles tends to infinity.

\subsection{Asymptotic convergence of Bellman equation}
We now analyze the asymptotic convergence of the particle-belief-based Bellman optimality equation \eqref{eq:particle_bellman}. Without loss of generality, we assume $\Theta$ to take values in a finite discrete set $\mathcal S_{\Theta} = \{\theta^1,\dots,\theta^K\}$. 
For any $\theta$, let $\mathcal{I}_\theta=\{i\in\{1,\cdots,N\} \mid \theta_i = \theta\}$ be the set of particles with $\theta_i=\theta$. 
According to \eqref{eq:particle_integral_sum}, the marginal mass is given by
$\hat q_t^{(N)}(\theta)=\sum_{i\in \mathcal I_\theta}\omega_{i,t}$ and the conditional
empirical measure is 
$\hat b_t^{(N)}(dx\mid\theta)=\sum_{i \in \mathcal I_\theta}\frac{\omega_{i,t}}{\hat q_t^{(N)}(\theta)}\delta_{x_{i,t}}(dx)$. Let $q_0(\theta^j)>0$ for all $j=1,\dots,K$ denote the prior over each value satisfying $\sum_{j=1}^K q_0(\theta^j)=1$. 

At each time step $t$, we next show that as the number of particles $N$ tends to infinity, the optimal value function $V_t^{*,(N)}(\cdot)$ in the particle formulation \eqref{eq:particle_bellman} converges to the optimal value function $V_t^*(\cdot)$ in the original formulation \eqref{eq: Bellman kernel}. 
Throughout this subsection, we fix $t\in\{1,\dots,T\}$ and drop the subscript $t$ from the arguments of the kernel $\mathcal K_t$  to simplify notation whenever no ambiguity arises, while retaining the time index on $\mathcal K_t$. We further impose the following regularity assumptions. 

\begin{assum}[Regularity conditions]\label{assum:regularity} The following statements hold for the state space and data-sharing policy: 
\begin{enumerate}
    \item[\emph{(i)}]  The state space $\mathsf X$ is compact.

    \item[\emph{(ii)}] 
    
    For each $t$, the admissible data-sharing policies $\mathcal K_t(dy\mid \theta,x,\beta)$ admit strictly positive probability density functions $k_t(y\mid\theta,x,\beta;u)$, parameterized by a decision variable
    $u\in\mathcal U_t$, where $\mathcal U_t\subset\mathbb R^{d_u}$ is compact, i.e., $
    \mathcal K_t(dy\mid \theta,x,\beta)=k_t(y\mid\theta,x,\beta;u)dy$. 
    Moreover, for any fixed $(\theta,y,\beta,u)$, the map $x\mapsto k_t(y\mid\theta,x,\beta;u)$ is continuous on $\mathsf X$.
    \item[\emph{(iii)}]There exist measurable functions $\underline k_t,\overline k_t:\mathbb R^N\to(0,\infty)$ such that $
0<\underline k_t(y)\le k_t(y\mid\theta,x,\beta;u)\le \overline k_t(y),~\forall(\theta,x,\beta,u,y)$. 
These functions satisfy the integrability conditions $\displaystyle\int\overline k_t(y)\,dy <\infty$, $\displaystyle\int\overline k_t(y)\Big(1+\big|\log \underline k_t(y)\big| +\big|\log \overline k_t(y)\big|\Big)\,dy <\infty$, and $\displaystyle\int(1+\|y\|_2)\overline k_t(y)\,dy<\infty$.
\item [\emph{(iv)}]
There exists $L_k<\infty$ such that for all $u,v\in\mathcal U_t$ and all $(\theta,x,\beta,y)$
\[
|k_t(y\mid\theta,x,\beta;u)-k_t(y\mid\theta,x,\beta;v)|
\le L_k\|u-v\|\,\overline k_t(y).
\] 
\item [\emph{(v)}]
$r(\theta^*,x_{t:t+1},y_t)$ is continuous and bounded.
\item [\emph{(vi)}]
The transition $p(X_{t+1}\mid \Theta, X_t,W_t)$ is Feller, i.e., for any bounded continuous  function $g$, the mapping $x_t\mapsto \int g(x_{t+1})p(dx_{t+1}\mid \Theta, x_t,w_t)$ is continuous.
\end{enumerate}
\end{assum}

To establish the asymptotic convergence of the Bellman optimality equation under Assumption~\ref{assum:regularity}, we first show the asymptotic convergence of the stepwise cost in Lemma~\ref{lem:MI-conv} (see Appendix~\ref{appendix:lem_MI-conv} for proof). 
\begin{lem}\textbf{\emph{(Asymptotic convergence of stepwise cost)}}\label{lem:MI-conv}
Under Assumption~\ref{assum:regularity}, $\lim_{N\rightarrow\infty}C_t^{(N)}\big(\hat\beta_t^{(N)},\mathcal K_t\big) = C_t\big(\beta_t,\mathcal K_t\big)$ almost surely. 
\end{lem}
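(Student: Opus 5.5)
The plan is to split the stage cost $C_t(\beta,\mathcal K_t)$ into its three components and show that each converges along $\hat\beta_t^{(N)}\Rightarrow\beta_t$, which holds almost surely by Lemma~\ref{lem:pf-consistency}. The components are:
\begin{itemize}
\item[] the conditional MI term $\rho\, I(\Theta;Y_t\mid h_t)$,
\item[] the expected system cost $\mathbb E[r(\theta^*,X_{t:t+1},Y_t)]$,
\item[] the distortion term $\lambda(\mathbb E[d(X_t,Y_t)]-\hat D)$.
\end{itemize}
All three are integrals against $\beta$ of functions built from the density $k_t(y\mid\theta,x,\beta;u)$. So I fix an outcome in the probability-one event where weak convergence holds, and argue deterministically from there. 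Because $\mathcal S_\Theta$ is finite, weak convergence of the joint measure gives $\hat q_t^{(N)}(\theta)\to q_t(\theta)$ for each $\theta$. Here I use that the indicator $\mathbf 1\{\theta\}$ is continuous on a discrete set, and that $q_t(\theta)>0$ since $q_0>0$ and the kernels are strictly positive. It also gives $\hat b_t^{(N)}(\cdot\mid\theta)\Rightarrow b_t(\cdot\mid\theta)$ on the compact set $\mathsf X$.

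For the MI term I will write it explicitly. Define the mixtures
\[
m_\theta(y)=\int k_t(y\mid\theta,x;u)\,b(dx\mid\theta),\qquad m(y)=\sum_\theta q(\theta)m_\theta(y).
\]
Then
\[
I=\sum_\theta q(\theta)\int m_\theta(y)\log\frac{m_\theta(y)}{m(y)}\,dy.
\]
For each fixed $y$, the map $x\mapsto k_t(y\mid\theta,x;u)$ is bounded and continuous by Assumption~\ref{assum:regularity}(ii)--(iii). Weak convergence therefore gives pointwise convergence $\hat m_\theta^{(N)}(y)\to m_\theta(y)$, and hence $\hat m^{(N)}(y)\to m(y)$. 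Every mixture lies in $[\underline k_t(y),\overline k_t(y)]$. The integrand is then dominated by $\overline k_t(y)\big(|\log\underline k_t(y)|+|\log\overline k_t(y)|\big)$, which is integrable by (iii). Dominated convergence in $y$ yields convergence of the MI term.

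The other two terms need the joint convergence of the test function. Consider
\[
\int\!\!\int d(x,y)\,k_t(y\mid\theta,x;u)\,dy\,\beta(d\theta,dx).
\]
I must show that $x\mapsto\int d(x,y)k_t(y\mid\cdot)\,dy$ is bounded and continuous on $\mathsf X$. Continuity follows from dominated convergence, using continuity of $k_t$ in $x$ together with the bound $|d(x,y)|\le C(1+\|y\|)$ on compact $\mathsf X$. That growth bound is my reading of why (iii) includes the moment $\int(1+\|y\|)\overline k_t<\infty$; if $d$ grows faster in $y$, a truncation argument is needed. For $r$, the inner integral over $x_{t+1}$ is continuous in $x_t$ by the Feller property (vi) and bounded by (v). Integrating against $k_t\,dy$ preserves continuity and boundedness, again by dominated convergence with $\sup|r|\,\overline k_t$.

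The main obstacle is that the kernel also depends on the belief argument $\beta$. Under the particle formulation the policy is evaluated at $\hat\beta^{(N)}_t$ instead of $\beta_t$. My first approach is to treat $\mathcal K_t$ as a fixed kernel, i.e.\ a fixed decision $u$ with the belief argument frozen. Then the only $N$-dependence enters through the integrating measure, and all bounds in (iii) are uniform in $\beta$ and $u$. If the statement instead requires the $\beta$-argument to move with $N$, I would need an additional continuity of $k_t$ in $\beta$. I would then combine that with the uniform envelope $\overline k_t$ to run the same dominated convergence argument.
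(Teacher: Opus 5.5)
Your proposal takes essentially the same route as the paper's proof. You use finiteness of $\mathcal S_\Theta$ to get a.s.\ convergence of $\hat q_t^{(N)}(\theta)$ and $\hat b_t^{(N)}(\cdot\mid\theta)$, obtain pointwise convergence of the mixture densities from continuity of $k_t$ in $x$, dominate the MI integrand by $\overline k_t(|\log\underline k_t|+|\log\overline k_t|)$, and apply dominated convergence; your explicit treatment of the $d$ and $r$ terms fills in what the paper only calls ``similar''. The $\beta$-dependence you flag is a real point the paper glosses over: its proof passes directly from $k_t(y\mid\theta^j,x,\hat\beta_t^{(N)};u)$ to $k_t(y\mid\theta^j,x,\beta_t;u)$ with no assumption on how $k_t$ depends on $\beta$, so the extra continuity in $\beta$ (uniform over the compact $\mathsf X$) that you would add is exactly what the paper tacitly relies on.
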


Then, the asymptotic convergence of the Bellman optimality equation is stated below. 
\begin{thm} \textbf{\emph{(Asymptotic convergence of Bellman equation)}}\label{thm: asymptotic convergence} 
\label{thm:asymptotic_convergence}
Under Assumption~\ref{assum:regularity}, $\lim_{N\to\infty}V_t^{*,(N)}\!\big(\hat\beta_t^{(N)}\big) = V_t^*\!\big(\beta_t\big)$ almost surely. 
\end{thm}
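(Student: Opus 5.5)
The plan is backward induction on $t$, starting from the terminal condition $V_{T+1}^{*,(N)}=V_{T+1}^*=0$. For the inductive step I will use three ingredients: the stepwise convergence of Lemma~\ref{lem:MI-conv}, the weak consistency of Lemma~\ref{lem:pf-consistency}, and the compact parametrization $u\in\mathcal U_t$ of admissible kernels from Assumption~\ref{assum:regularity}(ii). Through that parametrization, the minimization over $\mathcal K_t$ in \eqref{eq: Bellman kernel} and \eqref{eq:particle_bellman} becomes a minimization over a compact set. Writing
\[
Q_t(\beta,u)=C_t(\beta,\mathcal K_t^u)+\mathbb E[V_{t+1}^*(\beta_{t+1})\mid \beta,u]
\]
and $Q_t^{(N)}$ for its particle analogue, it suffices to prove
\[
\sup_{u\in\mathcal U_t}\big|Q_t^{(N)}(\hat\beta_t^{(N)},u)-Q_t(\beta_t,u)\big|\to 0 \quad\text{a.s.}
\]
This suffices because $|\min_u f-\min_u g|\le\sup_u|f-g|$.

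\textbf{Upgrading pointwise to uniform convergence.} Lemma~\ref{lem:MI-conv} gives convergence of the stage cost for each fixed $u$. I will upgrade this to uniform convergence in $u$ via an equicontinuity argument.
\begin{itemize}
\item Assumption~\ref{assum:regularity}(iv) gives $|k_t(y\mid\cdot;u)-k_t(y\mid\cdot;v)|\le L_k\|u-v\|\overline k_t(y)$.
\item Combined with the two-sided bounds in (iii), this shows the distortion term and the $r$-term are Lipschitz in $u$, with constants independent of $N$ and of the belief. For the $r$-term this also uses that $r$ is bounded, by (v).
\item The MI term $I(\Theta;Y_t\mid\cdot)=\mathbb E[\log k_t-\log \int k_t\,d\beta]$ is also Lipschitz in $u$. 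Here $|\log a-\log b|\le |a-b|/\min(a,b)\le L_k\|u-v\|\,\overline k_t/\underline k_t$, and the integrability of $\overline k_t(1+|\log\underline k_t|+|\log\overline k_t|)$ controls the resulting error after integrating against $k_t\le\overline k_t$.
\end{itemize}
Given equicontinuity, I take a finite $\epsilon$-net of the compact set $\mathcal U_t$, apply Lemma~\ref{lem:MI-conv} on the net (a countable intersection of probability-one events), and conclude that $\sup_u|C_t^{(N)}-C_t|\to0$ a.s.

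\textbf{Continuation term.} This is the main obstacle. The particle belief at $t+1$ is produced by the particle update, i.e.\ the reweighting of Lemma~\ref{lem: weight_update} followed by propagation, from $\hat\beta_t^{(N)}$ and a random $y_t$. I need
\[
\mathbb E\big[V_{t+1}^{*,(N)}(\hat\beta_{t+1}^{(N)})\big]\to\mathbb E\big[V_{t+1}^*(\beta_{t+1})\big].
\]
I will split this difference into two pieces.
\begin{itemize}
\item $\mathbb E[V_{t+1}^{*,(N)}(\hat\beta_{t+1}^{(N)})-V_{t+1}^*(\hat\beta_{t+1}^{(N)})]$: handled by the induction hypothesis. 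I will state the hypothesis in a strengthened form: uniform convergence on weakly convergent sequences of beliefs, together with a uniform bound $|V_{t+1}|\le (T-t)B$. That bound follows from boundedness of $r$, of $d$ on the compact $\mathsf X$, and of the MI, which is at most $\log K$ because $\Theta$ is finite.
\item $\mathbb E[V_{t+1}^*(\hat\beta_{t+1}^{(N)})-V_{t+1}^*(\beta_{t+1})]$: handled by dominated convergence over $y_t$. This needs continuity of $V_{t+1}^*$ in the weak topology, which itself comes from the same induction using Feller continuity (vi) and continuity of $x\mapsto k_t$ from (ii).
\end{itemize}
The predictive density of $y_t$ under the particle belief, $\sum_i\omega_{i,t}k_t(y\mid\theta_i,x_{i,t};u)$, converges pointwise to its exact counterpart by Lemma~\ref{lem:pf-consistency} and (ii), and it is dominated by $\overline k_t$, which is integrable. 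By Scheffé's lemma it therefore converges in total variation, uniformly in $u$ by the same Lipschitz/net argument. Together with the bound on $V_{t+1}$, this closes the continuation term.

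\textbf{Conclusion.} With both uniform convergences in hand, taking the minimum over $u$ yields $V_t^{*,(N)}(\hat\beta_t^{(N)})\to V_t^*(\beta_t)$ a.s., completing the induction. The delicate points are formulating the induction hypothesis as a continuity-plus-convergence statement strong enough to propagate, and handling the resampling randomness. For the latter, I would rely on Lemma~\ref{lem:pf-consistency} holding a.s. at time $t+1$ conditional on each $y_t$, with the conditionally unbiased resampling scheme.
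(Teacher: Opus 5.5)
Your proposal follows the paper's proof essentially step for step. The shared skeleton is:
\begin{itemize}
\item backward induction from $V_{T+1}\equiv0$;
\item the bound $|\min_u Q_t^{(N)}-\min_u Q_t|\le\sup_{u\in\mathcal U_t}|Q_t^{(N)}-Q_t|$ and a split into stage-cost and continuation gaps;
\item a pointwise-to-uniform upgrade of Lemma~\ref{lem:MI-conv} through an $N$-independent Lipschitz modulus in $u$ from Assumption~\ref{assum:regularity}(iv), where your finite $\epsilon$-net does the work of the paper's Arzel\`a--Ascoli lemma;
\item the same two-piece split of the continuation gap, closed by dominated convergence and a uniform bound on the value functions.
\end{itemize}
You are more careful than the paper in two places. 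You state that the second piece needs weak continuity of $V_{t+1}^*$, whereas the paper only invokes ``consistency of the particle belief update.'' You also handle, via Scheff\'e, the difference between the particle and exact predictive laws of $y_t$, which the paper does not address.

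\textbf{The MI Lipschitz step does not close as written.} From $|\log a-\log b|\le L_k\|u-v\|\,\overline k_t/\underline k_t$, integrating against a density bounded by $\overline k_t$ leaves $\int\overline k_t^2/\underline k_t\,dy$. Assumption~\ref{assum:regularity}(iii) does not make this finite; it diverges, e.g., for Gaussian kernels whose variance can vary by a factor of two or more. A working repair:
\begin{itemize}
\item Write $I(\Theta;Y_t)=H(Y_t)-H(Y_t\mid\Theta)$ in terms of $p_Y$ and $f_{Y\mid\Theta}=\int k_t\,b(dx\mid\theta)$. Note the MI integrand is $\log f_{Y\mid\Theta}-\log p_Y$, not $\log k_t-\log p_Y$, because $X_t$ is marginalized.
\item Use $|a\log a-b\log b|\le|a-b|\,(1+|\log\underline k_t|+|\log\overline k_t|)$ for $a,b\in[\underline k_t,\overline k_t]$, together with $|f^u-f^v|,|p^u-p^v|\le L_k\|u-v\|\,\overline k_t$.
\item This yields the Lipschitz constant $2L_k\int\overline k_t(1+|\log\underline k_t|+|\log\overline k_t|)\,dy$, which is exactly the integrability (iii) provides.
\end{itemize}

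\textbf{The distortion term needs a different bound.} $d$ is not bounded just because $\mathsf X$ is compact, since $y$ ranges over $\mathbb R^N$. The bound on the expected distortion, and hence on $V_{t+1}$, comes from $\int(1+\|y\|_2)\overline k_t\,dy<\infty$ together with at most linear growth of $d$ in $y$.

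\textbf{Two places where your added rigor needs one more ingredient.}
\begin{itemize}
\item Your weak-continuity induction for $V_{t+1}^*$ tacitly requires $\beta\mapsto k_t(y\mid\theta,x,\beta;u)$ to be continuous; so does Lemma~\ref{lem:MI-conv} itself. Assumption (ii) only gives continuity in $x$, while the kernel is evaluated at the approximating belief on one side and the limit belief on the other. Alternatively, drop the $\beta$-argument, since $u$ is optimized separately at each belief.
\item Uniformity in $u$ of the continuation gap does not follow from (iv) alone, although the paper asserts that it does, because $u$ also enters $\hat\beta_{t+1}^{(N)}$ through the reweighting. Your continuous-convergence hypothesis closes this if you apply it along sequences $u_N\to u$. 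For fixed $y$, the reweighted beliefs under $u_N$ and $u$ differ in total variation by at most $2L_k\|u_N-u\|\,\overline k_t(y)/\underline k_t(y)$. Then conclude with a subsequence argument on the compact set $\mathcal U_t$.
\end{itemize}
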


\textbf{Proof sketch for Theorem~\ref{thm: asymptotic convergence}.}
We use backward induction.
At the terminal time, the convergence holds trivially since $V_{T+1}^{*,(N)}\equiv V_{T+1}^*\equiv 0$.
At the induction step, assume $V_{t+1}^{*,(N)}(\hat\beta_{t+1}^{(N)})\to V_{t+1}^*(\beta_{t+1})$ almost surely. 
Define $Q_t^{(N)}(\hat\beta_t^{(N)},\mathcal K_t) := C^{(N)}_t(\hat \beta^{(N)}_t, \mathcal{K}_t) + \mathbb{E}[V_{t+1}^{*,(N)}(\hat \beta^{(N)}_{t+1}) \mid h_t,\mathcal K_t]$ and $ Q_t(\beta_t,\mathcal K_t) := C_t( \beta_t, \mathcal{K}_t) + \mathbb{E}[V_{t+1}^*(\beta_{t+1}) \mid h_t,\mathcal K_t]$. Recall $h_t=(w_{1:t}, y_{1:t-1})$ is the history. Then, 
\begin{align}
    V_t^{*,(N)}(\hat\beta_t^{(N)})=&\min_{\mathcal K_t\in\mathcal A_t}Q_t^{(N)}(\hat\beta_t^{(N)},\mathcal K_t)\notag \\
V_t^*(\beta_t)=&\min_{\mathcal K_t\in\mathcal A_t}Q_t(\beta_t,\mathcal K_t),\notag
\end{align}
By Assumption~(ii), we can replace
$\min_{\mathcal K_t\in\mathcal A_t}(\cdot)$ by $\min_{u_t\in\mathcal U_t}(\cdot)$ and replace $\mathcal K_t$ by $u_t$ in both $Q_t$ and $Q_t^{(N)}$. 
Then, by the triangle inequality,
\begin{align}
    &| V_t^{*,(N)}(\hat\beta_t^{(N)})-V_t^*(\beta_t)|\notag \\&\qquad\leq \sup_{ u_t\in \mathcal U_t}|Q_t^{(N)}(\hat\beta_t^{(N)}, u_t)- Q_t(\hat\beta_t, u_t)|.\label{eq:delta_N}
\end{align}
Define $\Delta Q_N := \sup_{u_t\in\mathcal U_t}|Q_t^{(N)}(\hat\beta_t^{(N)},u_t)-Q_t(\beta_t,u_t)|$. It suffices to show $\Delta_N\rightarrow 0$ almost surely. 
By the triangular inequality, we decompose $\Delta_N\le \Delta Q_N^{(C)}+\Delta Q_N^{(V)}$, where $\displaystyle \Delta Q_N^{(C)} =\sup_{u_t\in\mathcal U_t}|C^{(N)}_t(\hat\beta_t^{(N)},u_t)-C_t(\beta_t,u_t)|$ is the current cost gap, and  $\displaystyle  \Delta Q_N^{(V)} = \sup_{u_t\in\mathcal U_t}|
\mathbb E[V_{t+1}^{*,(N)}(\hat\beta_{t+1}^{(N)})\mid h_t,u_t]
-\mathbb E[V_{t+1}^{*}(\beta_{t+1})\mid h_t,u_t]
|$ is the future value gap.
We then establish the point convergence and uniform convergence of these gaps (see Appendix~\ref{app:proof_thm_asymp} for details).

\subsection{Upper bound on particle MI under Gaussian policy}
Prior work \cite{weng2025optimal,erdemir2020privacy} often discretizes shared data, leading to coarse, inaccurate representations and a combinatorial growth of possible data-sharing outcomes in high-dimensional spaces. 
To address this, we enable the sharing of continuous data by devising a Gaussian policy conditioned on the current state, private parameter, and belief state, i.e., $Y_t \sim\mathcal{N}(\mu_t(X_t,\Theta,\hat{\beta}_t^{(N)};u),\Sigma_t(X_t,\Theta,\hat{\beta}_t^{(N)};u))$, where the mean $\mu_t$ and covariance matrix $\Sigma_t$ are the output of a deterministic mapping (e.g., a neural network) with $u\in\mathcal{U}$ being parameters and $X_t,\Theta,\hat{\beta}_t^{(N)}$ being input. 

We next quantify the particle MI under such a policy. The per-step particle MI is expressed as $I^{(N)}_u(\Theta;Y_t\mid h_t)  = H^{(N)}(Y_t \mid h_t;u) - H^{(N)}(Y_t \mid \Theta, h_t;u)$, where $H^{(N)}$ denotes corresponding conditional differential entropy.

We calculate the induced marginal density of $Y_t$ given $h_t$ as the following Gaussian mixture: 
\begin{align}
p^{(N)}(y\mid h_t;u)
&=\int k_t(y\mid \theta,x,\hat\beta_t^{(N)};u)\,\hat\beta_t^{(N)}(d\theta,dx)\notag\\
&=\sum_{i=1}^N \omega_{i,t}\mathcal{N}(y;\mu_{i,t}(u), \Sigma_{i,t}(u)),
\label{eq:global_gmm_sum}
\end{align}
where $\mu_{i,t}(u) := \mu_t\!\big(\theta_i,x_{i,t},\hat\beta_t^{(N)};u\big)$ and $\Sigma_{i,t}(u) := \Sigma_t\!\big(\theta_i,x_{i,t},\hat\beta_t^{(N)};u\big)$ denote the mean and covariance obtained by evaluating the policy at particle~$i$. 

Recall the marginal mass
$\hat q_t^{(N)}(\theta)=\sum_{i \in \mathcal I_\theta}\omega_{i,t}$ and the conditional
empirical measure
$\hat b_t^{(N)}(dx\mid\theta)=\sum_{i \in \mathcal I_\theta}\frac{\omega_{i,t}}{\hat q_t^{(N)}(\theta)}\delta_{x_{i,t}}(dx)$. Given $\theta$, the conditional density of $Y_t$ is therefore a Gaussian mixture:
\begin{align}
p^{(N)}(y\mid \theta,h_t;u)
&=\int k_t(y\mid \theta,x,\hat\beta_t^{(N)};u)\,\hat b_t^{(N)}(dx\mid \theta)\notag\\
&=\sum_{i\in\mathcal I_\theta}\frac{\omega_{i,t}}{\hat q_t^{(N)}(\theta)}\mathcal N\big(y;\mu_{i,t}(u),\Sigma_{i,t}(u)\big).
\label{eq:theta_gmm}
\end{align}
Consequently, both $H^{(N)}(Y_t\mid h_t)$ and $H^{(N)}(Y_t\mid \Theta=\theta,h_t)$
are the differential entropies of Gaussian mixture models (GMMs).
Since the differential entropy of a general GMM does not admit a closed-form expression, 
we derive a computable
upper bound on the MI $I^{(N)}(\Theta;Y_t\mid h_t)$ to enable tractable planning and learning. 

\begin{thm}[KL/Chernoff Mixture Bound]\label{th: upp-bound} Denote $\mathcal N_{i,t}(\cdot;u):=\mathcal N\!\big(\cdot;\mu_{i,t}(u),\Sigma_{i,t}(u)\big)$ 
and $\bar\omega_{i,t}(\theta):=\omega_{i,t}/\hat q_t^{(N)}(\theta)$. 
Then, $\widehat I^{(N)}_u(\Theta;Y_t\mid h_t) := \widehat H^{(N)}_{\mathrm{KL}}(Y_t\mid h_t;u)
-\sum_{\theta\in \mathcal S_\Theta}\hat q_t^{(N)}(\theta)\cdot 
\widehat H^{(N)}_{C_\alpha}\!\big(Y_t\mid \theta,h_t;u\big)$ is an upper bound on the per-step particle MI $I^{(N)}_u(\Theta;Y_t\mid h_t)$, where 
\begin{align}
&\widehat H^{(N)}_{\mathrm{KL}}(Y_t\mid h_t;u)
=
\sum_{i=1}^N \omega_{i,t}H(\mathcal N_{i,t}(\cdot;u))
\notag\\&-\sum_{i=1}^N \omega_{i,t}\log\Bigg(\sum_{j=1}^N \omega_{j,t}
e^{-\mathrm{KL}\big(\mathcal N_{i,t}(\cdot;u)\|\mathcal N_{j,t}(\cdot;u)\big)}\Bigg),
\notag\\
&\widehat H^{(N)}_{C_\alpha}\!\big(Y_t\mid \theta,h_t;u\big)
=
\sum_{i\in\mathcal I_\theta}\bar\omega_{i,t}(\theta)H(\mathcal N_{i,t}(\cdot;u))
\notag\\&-\sum_{i\in\mathcal I_\theta}\bar\omega_{i,t}(\theta)\log\Bigg(\sum_{j\in\mathcal I_\theta}\bar\omega_{i,t}(\theta)
e^{-C_{\alpha}\big(\mathcal N_{i,t}(\cdot;u)\|\mathcal N_{j,t}(\cdot;u)\big)}\Bigg).\notag
\end{align}
\end{thm}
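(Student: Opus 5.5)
The plan is to bound the two entropy terms in $I^{(N)}_u(\Theta;Y_t\mid h_t)=H^{(N)}(Y_t\mid h_t;u)-\sum_{\theta}\hat q_t^{(N)}(\theta)H^{(N)}(Y_t\mid \Theta=\theta,h_t;u)$ separately. By \eqref{eq:global_gmm_sum} and \eqref{eq:theta_gmm}, both are differential entropies of finite Gaussian mixtures. So the theorem follows once we have two general mixture inequalities, in the spirit of the pairwise-distance estimators of Kolchinsky and Tracey. For a finite mixture $p=\sum_j w_j p_j$ of Gaussians we want
\begin{align}
&H(p)\le \sum_i w_i H(p_i)-\sum_i w_i\log\sum_j w_j e^{-\mathrm{KL}(p_i\|p_j)},\label{eq:plan-ub}\\
&H(p)\ge \sum_i w_i H(p_i)-\sum_i w_i\log\sum_j w_j e^{-C_\alpha(p_i\|p_j)},\label{eq:plan-lb}
\end{align}
for any $\alpha\in[0,1]$, where $C_\alpha(p_i\|p_j)=-\log\int p_i^\alpha p_j^{1-\alpha}$ is the Chernoff $\alpha$-divergence. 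We apply \eqref{eq:plan-ub} to the global mixture with weights $\omega_{i,t}$, and \eqref{eq:plan-lb} to each conditional mixture with weights $\bar\omega_{j,t}(\theta)$, $j\in\mathcal I_\theta$. Multiplying the latter by $\hat q_t^{(N)}(\theta)\ge 0$, summing over $\theta$, and subtracting gives $I^{(N)}_u\le\widehat I^{(N)}_u$. Here the inner weight in $\widehat H_{C_\alpha}$ is read as $\bar\omega_{j,t}(\theta)$. All quantities are finite: Gaussian entropies, KL and Chernoff divergences between nondegenerate Gaussians are finite and closed form, and the mixtures have finitely many components.

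For \eqref{eq:plan-ub}, write $H(p)=-\sum_i w_i\,\mathbb E_{p_i}[\log\sum_j w_j p_j(Y)]$. For each $i$ this equals $H(p_i)-\mathbb E_{p_i}\big[\log\sum_j w_j e^{f_j(Y)}\big]$ with $f_j=\log(p_j/p_i)$. The map $f\mapsto\log\sum_j w_je^{f_j}$ is convex, so Jensen gives $\mathbb E_{p_i}\log\sum_j w_je^{f_j}\ge\log\sum_j w_je^{\mathbb E_{p_i}f_j}=\log\sum_j w_je^{-\mathrm{KL}(p_i\|p_j)}$. Summing against $w_i$ yields \eqref{eq:plan-ub}. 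Integrability of $\log p_j$ under $p_i$ is immediate for Gaussians.

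For \eqref{eq:plan-lb}, introduce the component label $J$ with $P(J=i)=w_i$ and $Y\mid J=i\sim p_i$. Then $H(p)=H(Y\mid J)+I(J;Y)=\sum_i w_iH(p_i)+I(J;Y)$, so it suffices to show
\begin{align}
I(J;Y)\ge-\sum_i w_i\log\sum_j w_je^{-C_\alpha(p_i\|p_j)}.\label{eq:plan-mi}
\end{align}
I expect \eqref{eq:plan-mi} to be the main obstacle. The naive route fails: it would use $I=\sum_iw_i\mathrm{KL}(p_i\|p)\ge\sum_iw_iD_\alpha(p_i\|p)$ together with subadditivity of $s\mapsto s^{1-\alpha}$. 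That produces weights $w_j^{1-\alpha}$ rather than $w_j$, so it does not give the stated form. What I would try first is the Haussler--Opper lower bound on mutual information, which is exactly of the form \eqref{eq:plan-mi} for any divergence dominated appropriately by Rényi-type quantities. Concretely, I would write $I(J;Y)=\sum_iw_i\,\mathbb E_{p_i}\!\big[-\log\sum_jw_j\,p_j(Y)/p_i(Y)\big]$. I would then change measure to a tilted density proportional to $p_i^{\alpha}p_j^{1-\alpha}$ and apply Jensen/Hölder in the right direction, so that $\mathbb E_{p_i}[\cdot]$ is bounded below using $\int p_i^\alpha p_j^{1-\alpha}=e^{-C_\alpha(p_i\|p_j)}$. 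If this direct tilting does not close, I would fall back on quoting the Haussler--Opper/Kolchinsky--Tracey result. That result states the lower bound holds whenever the pairwise divergence satisfies $D\le C_\alpha$ pointwise in the required sense, and the Chernoff divergence itself trivially qualifies. The case $\alpha=1/2$ gives the Bhattacharyya version alluded to in the text.
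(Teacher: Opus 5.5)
Your decomposition is the paper's own. The paper writes the particle MI as the marginal GMM entropy minus the $\hat q_t^{(N)}$-weighted conditional GMM entropies. It bounds the former above by the KL pairwise estimator and each of the latter below by the Chernoff pairwise estimator, taking both mixture inequalities from Kolchinsky--Tracey without proof. Your log-sum-exp Jensen argument for the KL half is correct and makes that half self-contained. Reading the inner weight in $\widehat H^{(N)}_{C_\alpha}$ as $\bar\omega_{j,t}(\theta)$ is the right repair of the typo in the statement. With your citation fallback, your argument is as complete as the paper's.

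The self-contained route you sketch for the Chernoff half would not work, however. You plan to lower-bound each $\mathbb E_{p_i}[-\log\sum_j w_j p_j(Y)/p_i(Y)]$ separately by tilting and Jensen/H\"older. But this quantity is exactly $\mathrm{KL}(p_i\|p)$, and the componentwise inequality $\mathrm{KL}(p_i\|p)\ge-\log\sum_j w_j e^{-C_\alpha(p_i\|p_j)}$ is false. Take $p_1=\mathcal N(0,1)$, $p_{2,3}=\mathcal N(\pm\delta,1)$ with equal weights. By symmetry $p/p_1=1+\tfrac{\delta^2}{3}(y^2-1)+O(\delta^4)$, so $\mathrm{KL}(p_1\|p)=O(\delta^4)$. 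The right-hand side, by contrast, is $\approx\alpha(1-\alpha)\delta^2/3$ for $\alpha\in(0,1)$. The bound holds only on average over $i$, so the argument has to act on the posterior of $J$, which couples the components, rather than on each $p_i$ separately.

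The missing idea is to replace the true posterior by a tempered one. For any conditional pmf $q(j\mid y)$ we have $I(J;Y)=\sum_i w_i\mathbb E_{p_i}[\log(q(i\mid Y)/w_i)]+\mathbb E_Y[\mathrm{KL}(P_{J\mid Y}\|q(\cdot\mid Y))]$. Taking $q(j\mid y)\propto w_j p_j(y)^{1-\alpha}$ therefore gives $I(J;Y)\ge\sum_i w_i\mathbb E_{p_i}[-\log\sum_j w_j(p_j(Y)/p_i(Y))^{1-\alpha}]$. Only now is your per-component Jensen step valid, because $\mathbb E_{p_i}[(p_j/p_i)^{1-\alpha}]=\int p_i^{\alpha}p_j^{1-\alpha}=e^{-C_\alpha(p_i\|p_j)}$. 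This yields exactly the required $I(J;Y)\ge-\sum_i w_i\log\sum_j w_j e^{-C_\alpha(p_i\|p_j)}$ for every $\alpha\in[0,1]$, which is in substance the Haussler--Opper bound you intended to quote.
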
 
The proof of Theorem~\ref{th: upp-bound} combines 
(i) a KL-based upper bound $H^{(N)}(Y_t\mid h_t;u) \leq \widehat H^{(N)}_{\mathrm{KL}}(Y_t\mid h_t;u)$, with $\mathrm{KL}(\cdot\|\cdot)$ indicating the KL divergence, 
and (ii) a Chernoff $\alpha$-divergence-based lower bound $H^{(N)}\!\big(Y_t\mid \Theta,h_t;u\big) \geq \sum_{\theta\in\mathcal S_\Theta}\hat q_t^{(N)}(\theta)\cdot 
\widehat H^{(N)}_{C_\alpha}\!\big(Y_t\mid \theta,h_t;u\big)$, with $C_{\alpha}(\cdot\|\cdot)$ indicating the Chernoff $\alpha$-divergence \cite{kolchinsky2017estimating}. Since all components are Gaussian,
both $\mathrm{KL}$ and $C_{\alpha}$ admit closed-form expressions, and hence the bound is fully computable.

We next analyze the gap between the particle MI and the established upper bound, denoted by $
\Delta I^{(N)}_u=\widehat I^{(N)}_u(\Theta;Y_t\mid h_t)-I^{(N)}_u(\Theta;Y_t\mid h_t)$. 
Following the tightness analysis in~\cite{kolchinsky2017estimating}, we further establish an upper bound for $\Delta I^{(N)}_u$. Let us partition particles into multiple clusters using a clustering function $g(\cdot)$ according to two criteria: (1) $\mathrm{KL}(\mathcal N_{i,t}(\cdot; u)\|\mathcal N_{j,t}(\cdot; u))\leq \kappa$ whenever $g(i)=g(j)$ for some small $\kappa$ and (2) $\mathrm{BD}(\mathcal N_{i,t}(\cdot; u)\|\mathcal N_{j,t}(\cdot; u))\geq \gamma$ whenever $g(i)\neq g(j)$ for some large $\gamma$. Here, $\mathrm{BD}$ represents the Bhattacharyya distance $\mathrm{BD}(\cdot\| \cdot) =C_{0.5}(\cdot\|\cdot)$ \cite{fukunaga2013introduction}. 
Similarly, for each fixed $\theta\in\mathcal S_\Theta$, we define the clustering function $g_{\theta}$ for the conditional subcollection $I_\theta$: (1) $\mathrm{KL}(\mathcal N_{i,t}(\cdot;u)\|\mathcal N_{j,t}(\cdot; u))\leq \kappa_{\theta}$ whenever $g_\theta(i) = g_\theta(j)$ for some small $\kappa_\theta$ and (2) $\mathrm{BD}(\mathcal N_{i,t}(\cdot; u)\|\mathcal N_{j,t}(\cdot; u))\geq \gamma_{\theta}$ whenever $g_\theta(i) \neq g_\theta(j)$ for some large $\gamma_\theta$.

Under the above clustering conditions, the gap $\Delta I^{(N)}_u$ between $\widehat I^{(N)}_u(\Theta;Y_t\mid h_t)$ and particle MI $I^{(N)}_u(\Theta;Y_t\mid h_t)$ is formally upper bounded as in Corollary~\ref{prop:tightness}. The proof can be referred to in Appendix~\ref{appendix: Derivation of difference between MI and proposed upper bound}.

\begin{cor}[Tightness of the Upper Bound]\label{prop:tightness}
Let $|G|$ be the number of global clusters, $|G_\theta|$ be the number of clusters
within $\mathcal I_\theta$, and $\alpha,\alpha_\theta\in[0,1]$ be the Chernoff parameters. Then, 
\begin{align}
&\Delta \tilde I^{(N)}_{u}
\le
\kappa + (|G|-1)e^{\big(-(1-|1-2\alpha|)\gamma\big)}
+\notag\\&\sum_{\theta\in\mathcal S_\Theta}\hat q_t^{(N)}(\theta)\Big[
\kappa_\theta + (|G_\theta|-1)e^{\big(-(1-|1-2\alpha_\theta|)\gamma_\theta\big)}
\Big] \label{eq:MI_bound_tightness}
\end{align}  
\end{cor}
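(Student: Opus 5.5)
The gap $\Delta I^{(N)}_u$ splits into two separate entropy-estimation errors, and I will bound each with the mixture-entropy tightness results of \cite{kolchinsky2017estimating}. By the definition of $\widehat I^{(N)}_u$ in Theorem~\ref{th: upp-bound} and $I^{(N)}_u=H^{(N)}(Y_t\mid h_t;u)-\sum_\theta \hat q_t^{(N)}(\theta)H^{(N)}(Y_t\mid\theta,h_t;u)$,
\begin{align}
\Delta I^{(N)}_u
=&\big[\widehat H^{(N)}_{\mathrm{KL}}(Y_t\mid h_t;u)-H^{(N)}(Y_t\mid h_t;u)\big]\notag\\
&+\sum_{\theta\in\mathcal S_\Theta}\hat q_t^{(N)}(\theta)\big[H^{(N)}(Y_t\mid\theta,h_t;u)-\widehat H^{(N)}_{C_\alpha}(Y_t\mid\theta,h_t;u)\big].\notag
\end{align}
Both brackets are nonnegative by parts (i) and (ii) of the proof of Theorem~\ref{th: upp-bound}. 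I then bound the global term by $\kappa+(|G|-1)e^{-(1-|1-2\alpha|)\gamma}$ and each conditional term by the analogous expression with $(\kappa_\theta,\gamma_\theta,\alpha_\theta,|G_\theta|)$. Summing with weights $\hat q_t^{(N)}(\theta)$, which add up to one, gives \eqref{eq:MI_bound_tightness}.

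For each mixture I work with a clustered reference entropy. Let $P_c=\sum_{i:g(i)=c}\omega_{i,t}$ be the cluster masses. Write $Z$ for the component index and $C=g(Z)$, so that
\begin{align}
H(Y)&=H(Y\mid C)+I(Y;C),\notag\\
H(Y\mid C)&\ge H(Y\mid Z)=\sum_i\omega_{i,t}H(\mathcal N_{i,t}).\notag
\end{align}
\begin{enumerate}
\item[(a)] \emph{KL term.} Using $\mathrm{KL}\le\kappa$ inside a cluster and $e^{-\mathrm{KL}}\ge 0$ across clusters, I get
\[
-\log\sum_j\omega_{j,t}e^{-\mathrm{KL}_{ij}}\le \kappa-\log P_{g(i)},
\]
hence $\widehat H_{\mathrm{KL}}\le H(Y\mid Z)+\kappa+H(C)$. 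Since $I(Y;C)\le H(C)$, what remains is to show that $H(C)-I(Y;C)=H(C\mid Y)$ is absorbed in the cross-cluster term, and also to handle $H(Y\mid C)-H(Y\mid Z)$. I expect to avoid the latter by following Kolchinsky--Tracey's argument: upper estimators are bounded by $H(Y\mid Z)+H(C)$, and the true entropy is at least $H(Y\mid Z)+H(C)-(\text{overlap})$.
\item[(b)] \emph{Overlap bound.} I bound $H(C\mid Y)$ by the Bayes error of classifying $C$ from $Y$. Pairwise Chernoff bounds give $\int p_c^{\alpha}p_{c'}^{1-\alpha}\le e^{-C_\alpha}$, and for Gaussians $C_\alpha\ge(1-|1-2\alpha|)\mathrm{BD}\ge(1-|1-2\alpha|)\gamma$, which is the inequality Kolchinsky--Tracey use. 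Summing over the at most $|G|-1$ competing clusters yields $(|G|-1)e^{-(1-|1-2\alpha|)\gamma}$.
\item[(c)] \emph{Chernoff term.} For the conditional lower bound the argument is symmetric. Within a cluster, $C_{\alpha_\theta}\le\mathrm{KL}\le\kappa_\theta$ (Chernoff divergence is dominated by KL via Jensen), so $\widehat H_{C_\alpha}\ge H(Y\mid Z)-\kappa_\theta+H(C)-(\text{cross terms})$. The cross-cluster contributions $e^{-C_\alpha}$ are again bounded via the Bhattacharyya separation, while the true entropy satisfies $H\le H(Y\mid Z)+H(C)$.
\end{enumerate}

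The main obstacle is step (b): turning pairwise separation into a bound on $H(C\mid Y)$, or equivalently on $H(C)-I(Y;C)$, with exactly the constant $(|G|-1)e^{-(1-|1-2\alpha|)\gamma}$. This requires $\log(1+x)\le x$ and care with the cluster weights. Here I would invoke the corresponding tightness lemma of \cite{kolchinsky2017estimating} directly rather than rederive it, and for the conditional mixtures simply apply the same lemma to the renormalized weights $\bar\omega_{i,t}(\theta)$.
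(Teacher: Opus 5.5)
Your top-level decomposition of $\Delta I^{(N)}_u$ into $\widehat H^{(N)}_{\mathrm{KL}}-H^{(N)}$ plus the $\hat q_t^{(N)}$-weighted conditional gaps $H^{(N)}(\cdot\mid\theta)-\widehat H^{(N)}_{C_\alpha}(\cdot\mid\theta)$ is correct. Your bound (a), $\widehat H_{\mathrm{KL}}\le H(Y\mid Z)+H(C)+\kappa$, is also right. The trouble is that you then compare the \emph{true} entropy directly with the reference $H(Y\mid Z)+H(C)$, and the steps you give for this fail.

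In (c) you use $H\le H(Y\mid Z)+H(C)$, which is false. A single cluster ($H(C)=0$) of two distinct Gaussians has $H(Y)-H(Y\mid Z)=I(Y;Z)>0$; a correct version needs $+\kappa_\theta$. Your inference from $C_\alpha\le\mathrm{KL}\le\kappa_\theta$ also runs the wrong way. The bound $e^{-C_\alpha}\ge e^{-\kappa_\theta}$ lower-bounds the inner sum and hence \emph{upper}-bounds $\widehat H_{C_\alpha}$. To lower-bound $\widehat H_{C_\alpha}$ you need $e^{-C_\alpha}\le 1$ within clusters.

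In (b) the target $H(C\mid Y)\le(|G|-1)e^{-(1-|1-2\alpha|)\gamma}$ is not true in general. Cluster-level overlaps $\int p_c^{\alpha}p_{c'}^{1-\alpha}$ are not controlled by pairwise component separation, so dropping $I(Y;Z\mid C)$ from $H(Y)=H(Y\mid Z)+I(Y;Z\mid C)+H(C)-H(C\mid Y)$ loses too much. Concretely, take two equally weighted clusters of unit-variance 1D Gaussians with means $5k$ and $5k+2.5$. Then $\gamma=2.5^2/8$ and $e^{-\gamma}\approx 0.458$, yet numerically $H(C\mid Y)\approx 0.47$ for long combs.

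The missing idea, which is what the paper uses via Kolchinsky--Tracey, is never to compare $H$ with the reference directly. Instead use the sandwich $\widehat H_{C_\alpha}\le H\le \widehat H_{\mathrm{KL}}$ for each mixture, so that both one-sided gaps are dominated by the estimator gap:
\[
\widehat H^{(N)}_{\mathrm{KL}}-H^{(N)}\le \widehat H^{(N)}_{\mathrm{KL}}-\widehat H^{(N)}_{C_\alpha},\qquad
H^{(N)}(\cdot\mid\theta)-\widehat H^{(N)}_{C_\alpha}(\cdot\mid\theta)\le \widehat H^{(N)}_{\mathrm{KL}}(\cdot\mid\theta)-\widehat H^{(N)}_{C_\alpha}(\cdot\mid\theta).
\]
Everything then stays at the level of pairwise component terms, and no Bayes-error argument is needed. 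Your (a) handles the upper estimator. For the lower estimator, use $e^{-C_\alpha}\le 1$ in-cluster and $C_\alpha\ge(1-|1-2\alpha|)\mathrm{BD}\ge(1-|1-2\alpha|)\gamma$ across clusters. This gives $\sum_j\omega_j e^{-C_{\alpha,ij}}\le P_{g(i)}+(1-P_{g(i)})e^{-(1-|1-2\alpha|)\gamma}$. Then $\log(1+x)\le x$ together with $\sum_c P_c\frac{1-P_c}{P_c}=|G|-1$ yields $\widehat H_{C_\alpha}\ge H(Y\mid Z)+H(C)-(|G|-1)e^{-(1-|1-2\alpha|)\gamma}$.

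If by ``invoke the tightness lemma directly'' you meant exactly this estimator-gap bound, your argument coincides with the paper's. But the self-contained steps (b) and (c) as written do not establish it.
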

Corollary~\ref{prop:tightness} shows that the gap $\Delta I^{(N)}_u$ is controlled by 
the within-cluster similarity (via the KL terms, $\kappa,\kappa_{\theta}$) and
the between-cluster separation (via the BD terms, $\gamma,\gamma_{\theta}$). With a small within-cluster similarity and a large between-cluster separation, our established upper bound $\widehat{I}_u^{(N)}$ can be close to the particle MI $I_u^{(N)}$.

However, the KL/Chernoff mixture bound can be loose when strong privacy preservation is enforced. 
In particular, to reduce the distinguishability of $\Theta$ from the released data, the policy tends to produce similar mean and covariance across particles, i.e., $\mu_{i,t}(u)\approx \mu_{j,t}(u)$ and
$\Sigma_{i,t}(u)\approx \Sigma_{j,t}(u)$ even when $i$ and $j$ belong to different clusters (or different $\theta$-groups).
Accordingly, the within-cluster similarity parameters $\kappa$ and $\kappa_\theta$ may remain small, but the between-cluster separation measures $\gamma$ and $\gamma_\theta$ decrease substantially.
Note that \eqref{eq:MI_bound_tightness} contains terms of the form
$(|G|-1)e^{-(1-|1-2\alpha|)\gamma}$ and
$(|G_\theta|-1)e^{-(1-|1-2\alpha_\theta|)\gamma_\theta}$, which can be large if $\gamma$ and $\gamma_\theta$ are small.
In the extreme case where $\gamma,\gamma_\theta \approx 0$, \eqref{eq:MI_bound_tightness}  reduces to $\Delta I_u^{(N)}
\lesssim
\kappa + (|G|-1)+
\sum_{\theta\in\mathcal S_\Theta}\hat q_t^{(N)}(\theta)\Big[\kappa_\theta + (|G_\theta|-1)\Big]$,
which can be large whenever multiple clusters exist.
As a result, the KL/Chernoff mixture bound may remain high even when the true particle MI is already small,
making it overly conservative in high-privacy regimes.

To mitigate the looseness of the KL/Chernoff mixture bound in the high-privacy regime, we exploit a universal maximum-entropy upper bound. Specifically, for the marginal GMM $p^{(N)}(y\mid h_t;u)$ in~\eqref{eq:global_gmm_sum},
let $\bar\mu_t=\sum_{i=1}^N \omega_{i,t}\mu_{i,t}(u)$ and $\Sigma_{\mathrm{glob},t}
=\sum_{i=1}^N \omega_{i,t}\Big(\Sigma_{i,t}(u)
+\big(\mu_{i,t}(u)-\bar\mu_t\big)\big(\mu_{i,t}(u)-\bar\mu_t\big)^\top\Big)$. Since the Gaussian distribution maximizes entropy for a given covariance~\cite{10.5555/1146355}, we have $H^{(N)}(Y_t\mid h_t;u)
\le \widehat H^{(N)}_{\mathrm{G}}(Y_t\mid h_t;u)$, where $H^{(N)}_{\mathrm{G}}(Y_t\mid h_t;u)
= \frac{1}{2}\log((2\pi e)^d \,|\Sigma_{\mathrm{glob},t}|)$. Combining it with the Chernoff-based lower bound on
$H^{(N)}(Y_t\mid \Theta,h_t;u)$ in Theorem~\ref{th: upp-bound}, we propose the following regime-adaptive upper bound on the particle MI. In Appendix~\ref{appendix:proof for regime bound}, we prove that when the induced mixture components substantially overlap (e.g., $\mu_{i,t}(u)\approx \mu_{j,t}(u)$ and $\Sigma_{i,t}(u)\approx \Sigma_{j,t}(u)$ for many $(i,j)$), the Gaussian term $\widehat H^{(N)}_{\mathrm{G}}(Y_t\mid h_t;u)$ typically becomes active bound (i.e., minimizes the expression), yielding a tighter estimate.
\begin{thm}[Regime-Adaptive MI Upper Bound]\label{th:regime_adaptive}
The per-step particle MI is upper bounded by:
\begin{align}\label{eq:regime_adaptive_MI_ub}
I^{(N)}_u(\Theta;Y_t\mid h_t)
&\le
\widehat I^{(N)}_{u, \mathrm{RA}}(\Theta;Y_t\mid h_t),
\end{align}
where $\widehat I^{(N)}_{u, \mathrm{RA}}(\Theta;Y_t\mid h_t) =
\min\Big\{
\widehat H^{(N)}_{\mathrm{KL}}(Y_t\mid h_t;u),
\widehat H^{(N)}_{\mathrm{G}}(Y_t\mid h_t;u)
\Big\}
-
\sum_{\theta\in\mathcal S_\Theta}\hat q_t^{(N)}(\theta)
\widehat H^{(N)}_{C_\alpha}\big(Y_t\mid \theta,h_t;u\big)$.
\end{thm}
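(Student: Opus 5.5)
The plan is to start from the decomposition of the per-step particle MI already used for Theorem~\ref{th: upp-bound},
\begin{align*}
I^{(N)}_u(\Theta;Y_t\mid h_t)=H^{(N)}(Y_t\mid h_t;u)-H^{(N)}(Y_t\mid\Theta,h_t;u),
\end{align*}
where the conditional term is $H^{(N)}(Y_t\mid\Theta,h_t;u)=\sum_{\theta\in\mathcal S_\Theta}\hat q_t^{(N)}(\theta)H^{(N)}(Y_t\mid\theta,h_t;u)$, computed with the mixtures \eqref{eq:global_gmm_sum} and \eqref{eq:theta_gmm}. I will then bound the two entropy terms separately. For the subtracted term, I will reuse part (ii) of the proof of Theorem~\ref{th: upp-bound} verbatim: the Chernoff $\alpha$-divergence lower bound applies to each conditional mixture \eqref{eq:theta_gmm}, and averaging it over $\hat q_t^{(N)}$ gives $H^{(N)}(Y_t\mid\Theta,h_t;u)\ge\sum_\theta\hat q_t^{(N)}(\theta)\widehat H^{(N)}_{C_\alpha}(Y_t\mid\theta,h_t;u)$.

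For the first term, I will show that the marginal entropy is below both candidate bounds, so it is below their minimum. The inequality $H^{(N)}(Y_t\mid h_t;u)\le\widehat H^{(N)}_{\mathrm{KL}}(Y_t\mid h_t;u)$ is part (i) of Theorem~\ref{th: upp-bound}. For the Gaussian bound, I will check that the mixture \eqref{eq:global_gmm_sum} has finite second moments, which holds because each component does. Its mean is $\bar\mu_t$. By the law of total covariance, its covariance is $\sum_i\omega_{i,t}\Sigma_{i,t}(u)$ plus the between-component scatter, which is exactly $\Sigma_{\mathrm{glob},t}$. This matrix is positive definite, since every $\Sigma_{i,t}(u)\succ0$ and the weights sum to one. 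The maximum-entropy property of the Gaussian under a second-moment constraint then gives $H^{(N)}(Y_t\mid h_t;u)\le\frac12\log((2\pi e)^d|\Sigma_{\mathrm{glob},t}|)=\widehat H^{(N)}_{\mathrm{G}}$. I would justify that property in one line through the nonnegativity of $\mathrm{KL}(p\,\|\,\mathcal N(\bar\mu_t,\Sigma_{\mathrm{glob},t}))$. The cross term $-\int p\log\mathcal N$ depends on $p$ only through its first two moments, so it equals the Gaussian entropy.

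Combining the two steps, $H^{(N)}(Y_t\mid h_t;u)\le\min\{\widehat H^{(N)}_{\mathrm{KL}},\widehat H^{(N)}_{\mathrm G}\}$. Subtracting the lower bound on the conditional entropy then yields \eqref{eq:regime_adaptive_MI_ub}.

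The only delicate step is the Chernoff lower bound inherited from Theorem~\ref{th: upp-bound}. Two points need care there: it must be applied to the normalized within-$\theta$ weights $\bar\omega_{i,t}(\theta)$, and $\theta$-groups with $\hat q_t^{(N)}(\theta)=0$ must be dropped, since they contribute nothing. Everything else is an elementary moment computation, so I expect no real obstacle beyond confirming that the bound from \cite{kolchinsky2017estimating} holds for arbitrary finite Gaussian mixtures and every $\alpha\in[0,1]$.
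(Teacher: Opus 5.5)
Your proof is correct and follows the justification the paper gives in the text just before the theorem: $H^{(N)}(Y_t\mid h_t;u)\le\widehat H^{(N)}_{\mathrm G}(Y_t\mid h_t;u)$ by Gaussian maximum entropy, and the KL bound from Theorem~\ref{th: upp-bound} is a second upper bound, so their minimum is one as well; subtracting the averaged Chernoff lower bound then gives the result. The appendix section attached to this theorem proves something else: it expands both bounds for homoscedastic mixtures with small separation to show $\widehat H_{\mathrm G}\le\widehat H_{\mathrm{KL}}$ in that regime, which explains when the Gaussian term is active rather than proving the inequality, so your law-of-total-covariance and KL-nonnegativity steps make explicit what the paper leaves implicit.
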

For all $u\in\mathcal U$, the regime-adaptive bound is never looser than the KL/Chernoff mixture bound in
Theorem~\ref{th: upp-bound}. 

\begin{remark}[Other cost terms]
    Under the Gaussian data-sharing policy, the deviation term $\mathbb{E}[d(X_t,Y_t)]$ and the expected system impact $\mathbb{E}[r(\theta^*, X_t, X_{t+1}, Y_t)]$ may admit analytical forms. For example, when the deviation is defined by the Euclidean distance, its expectation admits a closed-form expression via properties of the non-central $\chi^2$ distribution. In general, when analytical forms are unavailable, the corresponding expectations can be computed numerically via Monte Carlo sampling.
\end{remark}

We show in Theorem~\ref{lem: gmm-closed} (see Appendix~\ref{appendix: Lem_gmm-closed} for proof) that we can construct a similar closed-form upper bound on particle MI for the GMM policy.  

\begin{thm}
\emph{(\textbf{Upper Bound on particle MI under GMM Policy})}\label{lem: gmm-closed}
If the data-sharing policy is a finite Gaussian mixture for each particle, the closed-form upper bound on particle MI in Theorem~\ref{th:regime_adaptive} still applies.  
\end{thm}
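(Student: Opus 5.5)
The plan is to reduce the GMM-policy case to the single-Gaussian case by flattening the mixture. Suppose that for each particle $i$ the policy takes the form
\[
k_t(y\mid\theta_i,x_{i,t},\hat\beta_t^{(N)};u)=\sum_{m=1}^{M}\pi_{i,m,t}(u)\,\mathcal N\big(y;\mu_{i,m,t}(u),\Sigma_{i,m,t}(u)\big),
\]
with nonnegative mixture weights $\pi_{i,m,t}(u)$ summing to one over $m$. Substituting this into \eqref{eq:global_gmm_sum} gives
\[
p^{(N)}(y\mid h_t;u)=\sum_{i=1}^N\sum_{m=1}^M \omega_{i,t}\pi_{i,m,t}(u)\,\mathcal N\big(y;\mu_{i,m,t}(u),\Sigma_{i,m,t}(u)\big).
\]
This is again a finite GMM, with $NM$ components indexed by $(i,m)$ and weights $\tilde\omega_{(i,m),t}:=\omega_{i,t}\pi_{i,m,t}(u)$, which sum to one. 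Likewise, \eqref{eq:theta_gmm} becomes a GMM over the index set $\tilde{\mathcal I}_\theta:=\mathcal I_\theta\times\{1,\dots,M\}$, with normalized weights $\tilde\omega_{(i,m),t}/\hat q_t^{(N)}(\theta)$. These weights sum to one because $\sum_m\pi_{i,m,t}=1$. So the marginal and $\theta$-conditional densities of $Y_t$ have exactly the same structure as before, only with a larger component set.

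The key observation is that the proofs of Theorems~\ref{th: upp-bound} and~\ref{th:regime_adaptive} use only three facts. First, $H^{(N)}(Y_t\mid h_t;u)$ and $H^{(N)}(Y_t\mid\theta,h_t;u)$ are entropies of finite mixtures with known weights and components. Second, the KL upper bound and the Chernoff-$\alpha$ lower bound of \cite{kolchinsky2017estimating} hold for any finite mixture, with respect to any chosen decomposition into components. Third, the Gaussian maximum-entropy bound depends only on the mixture's covariance. None of these uses the fact that the components come one per particle. The steps are therefore as follows.

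\textbf{Step 1.} Write $I^{(N)}_u=H^{(N)}(Y_t\mid h_t;u)-\sum_\theta \hat q_t^{(N)}(\theta)H^{(N)}(Y_t\mid\theta,h_t;u)$ for the flattened mixtures.

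\textbf{Step 2.} Apply the KL bound to the flattened global mixture. This gives $\widehat H_{\mathrm{KL}}^{(N)}$ with the sums over $i,j$ replaced by sums over the pairs $(i,m),(j,m')$.

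\textbf{Step 3.} Compute the mean $\bar\mu_t=\sum_{i,m}\tilde\omega_{(i,m),t}\mu_{i,m,t}$. Then compute the covariance $\Sigma_{\mathrm{glob},t}=\sum_{i,m}\tilde\omega_{(i,m),t}\big(\Sigma_{i,m,t}+(\mu_{i,m,t}-\bar\mu_t)(\mu_{i,m,t}-\bar\mu_t)^\top\big)$ by the law of total covariance. Invoke the maximum-entropy bound to obtain $\widehat H^{(N)}_{\mathrm G}$.

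\textbf{Step 4.} Apply the Chernoff lower bound within each $\tilde{\mathcal I}_\theta$. Average over $\theta$ with weights $\hat q_t^{(N)}(\theta)$.

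\textbf{Step 5.} Take the minimum of the two upper bounds on the marginal entropy and subtract the lower bound on the conditional entropy. This yields the same expression as \eqref{eq:regime_adaptive_MI_ub}.

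Closed-form computability then follows, because every component is Gaussian. Hence the component entropies, the pairwise KL and $C_\alpha$ divergences, and $\Sigma_{\mathrm{glob},t}$ are all explicit.

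The main point needing care is Step 4. The Chernoff bound must be applied to the conditional mixture with the \emph{same} flattened components used in the definition of $H^{(N)}(Y_t\mid\theta,h_t)$. The concern is whether it remains a valid lower bound for an arbitrary decomposition, including one where components with the same $i$ overlap strongly. I would verify this by rederiving the bound of \cite{kolchinsky2017estimating} directly. Introduce a latent index $Z=(i,m)$, and use $H(Y)\ge H(Y\mid Z)+\ldots$ together with Jensen's inequality applied to $-\log\sum_j\tilde\omega_j e^{-C_\alpha}$. That derivation holds for any discrete latent labeling of a mixture, so it is valid here. The only effect of the finer decomposition is a possibly looser bound, which the tightness analysis of Corollary~\ref{prop:tightness} would quantify through clusters of the flattened components.
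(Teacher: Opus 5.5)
Your proposal is correct and follows the same route as the paper: it flattens the per-particle mixtures into a single finite GMM indexed by $(i,k)$ with weights $\omega_{i,t}a_{i,k,t}(u)$, where the paper allows $K_i$ components for particle $i$ rather than a common $M$. It does the same for each $\theta$-conditional mixture and then directly applies Theorem~\ref{th:regime_adaptive}. The concern you raise in Step 4 needs no re-derivation: the KL and Chernoff-$\alpha$ bounds of \cite{kolchinsky2017estimating} already hold for an arbitrary finite mixture under any component labeling.
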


Consequently, assuming $H(Y_t\mid h_t;u)$ and $H(Y_t\mid \theta,h_t;u)$ are finite for all $\theta$, by the universal approximation property of finite GMM (e.g., \cite{nguyen2020approximation,norets2010approximation}), our proposed upper bound of particle-based MI extends to any continuous policy. See Appendix~\ref{appendix:Cor_any_policy} for proof.

\begin{cor}\emph{(\textbf{Universal Closed-Form Upper Bound})}\label{cor:any_policy}
For any continuous policy kernel density
$k_t(\cdot\mid \theta,x,\hat\beta_t^{(N)};u)$
and any $\varepsilon>0$, there exists a finite Gaussian mixture policy kernel
$\hat{\mathcal K}_{J(\delta)}$ such that the induced per-step particle-based MI satisfy
\begin{align}
\Big|
I^{(N)}(\Theta;Y_t\mid h_t;\mathcal K_t)
-
I^{(N)}(\Theta;Y_t\mid h_t;\hat{\mathcal K}_{J(\delta)})
\Big|
\le \varepsilon.\notag
\end{align}
Moreover, applying Theorem~\ref{th:regime_adaptive} to $\hat{\mathcal K}_{J(\delta)}$ yields the computable upper bound
\begin{align}
&I^{(N)}(\Theta;Y_t\mid h_t;\mathcal K_t)
\le\widehat I^{(N)}_{\hat{\mathcal K}_{J(\delta)}, \mathrm{RA}}(\Theta;Y_t\mid h_t) +\varepsilon.
\notag
\end{align}

\end{cor}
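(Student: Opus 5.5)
The strategy is to approximate the kernel componentwise, pass to the two mixture entropies, and then use Theorem~\ref{th:regime_adaptive} on the approximating mixture.

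\textbf{Approximating kernels.} Fix $h_t$ and the particle belief $\hat\beta_t^{(N)}$. Only finitely many densities then enter the per-step particle MI, namely $k_i(\cdot):=k_t(\cdot\mid\theta_i,x_{i,t},\hat\beta_t^{(N)};u)$ for $i=1,\dots,N$. By \eqref{eq:global_gmm_sum} and \eqref{eq:theta_gmm},
\[
I^{(N)}(\Theta;Y_t\mid h_t;\mathcal K_t)=H\Big(\sum_i\omega_{i,t}k_i\Big)-\sum_{\theta}\hat q_t^{(N)}(\theta)\,H\Big(\sum_{i\in\mathcal I_\theta}\bar\omega_{i,t}(\theta)k_i\Big).
\]
It is therefore enough to approximate each $k_i$. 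I would invoke the universal approximation results for finite GMMs \cite{nguyen2020approximation,norets2010approximation}. For every $\delta>0$ they give $J(\delta)$ and mixtures $\hat k_i$ with $J(\delta)$ components such that $\|k_i-\hat k_i\|_{L^1}\le\delta$. Under the moment/tail conditions in Assumption~\ref{assum:regularity}(iii), the results of \cite{norets2010approximation} further give convergence in Kullback--Leibler divergence, $\mathrm{KL}(k_i\|\hat k_i)\to0$. Define $\hat{\mathcal K}_{J(\delta)}$ to equal $\hat k_i$ at particle $i$; off the particle set it can be any measurable extension.

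\textbf{Entropy continuity.} The mixtures $p=\sum_i\omega_{i,t}k_i$ and $\hat p=\sum_i\omega_{i,t}\hat k_i$, and their $\theta$-conditional analogues, are finite convex combinations. Hence $\|p-\hat p\|_{L^1}\le\delta$, and by joint convexity $\mathrm{KL}(p\|\hat p)\le\sum_i\omega_{i,t}\mathrm{KL}(k_i\|\hat k_i)$. Differential entropy is not continuous in $L^1$ in general, so this is the main obstacle. I would handle it by using $|H(p)-H(\hat p)|\le\big|\int(p-\hat p)\log\hat p\big|+\mathrm{KL}(p\|\hat p)$. The first term needs uniform control of $\log\hat p$ against the envelope $\overline k_t$, using $\int\overline k_t(1+|\log\underline k_t|+|\log\overline k_t|)<\infty$. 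To get this control I would choose the approximating mixtures with densities sandwiched as $c\,\underline k_t\le\hat k_i\le C\,\overline k_t$ on a large compact set $B_R$. The tails outside $B_R$ are made small by the integrability conditions. Finiteness of $H(Y_t\mid h_t)$ and $H(Y_t\mid\theta,h_t)$, which is assumed, ensures all quantities are well defined. Dominated convergence then gives $H(\hat p)\to H(p)$, and likewise for each $\theta$-conditional mixture, as $\delta\to0$.

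\textbf{Conclusion.} The MI is a finite weighted difference of $1+|\mathcal S_\Theta|$ such entropies, so picking $\delta$ small enough makes the total error at most $\varepsilon$; this is the first claim. For the second, $\hat{\mathcal K}_{J(\delta)}$ is a finite Gaussian mixture at each particle. Theorem~\ref{lem: gmm-closed} lets Theorem~\ref{th:regime_adaptive} apply, which gives $I^{(N)}(\Theta;Y_t\mid h_t;\hat{\mathcal K}_{J(\delta)})\le\widehat I^{(N)}_{\hat{\mathcal K}_{J(\delta)},\mathrm{RA}}$. Combining this with the $\varepsilon$-approximation yields the stated bound.
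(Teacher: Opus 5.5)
\textbf{Verdict: genuine gap.} Your skeleton matches the paper's: approximate the kernel in $L^1$, transfer closeness to the MI, then apply Theorem~\ref{lem: gmm-closed} and Theorem~\ref{th:regime_adaptive}. Reducing to the finitely many densities $k_i$ at the particles is a sound simplification. The gap is in your entropy-continuity step.

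You claim that Assumption~\ref{assum:regularity}(iii) gives $\mathrm{KL}(k_i\|\hat k_i)\to0$, and that the region outside $B_R$ is controlled by the integrability of $\overline k_t$. But (iii) only provides a first moment, $\int(1+\|y\|)\overline k_t<\infty$. Every finite GMM satisfies $-\log\hat p(y)\ge a\|y\|^2-b$ for some $a>0$. So if $p$ has infinite second moment, then $\int_{B_R^c}p\,|\log\hat p|=+\infty$ and $\mathrm{KL}(p\|\hat p)=+\infty$ for every $R$ and every finite GMM $\hat p$. A concrete case is a Student-$t$ location kernel with $\nu\in(1,2]$ degrees of freedom and location depending continuously on $(\theta,x)$. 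It satisfies (ii)--(iii) and has finite differential entropy, yet your inequality $|H(p)-H(\hat p)|\le|\int(p-\hat p)\log\hat p|+\mathrm{KL}(p\|\hat p)$ is vacuous for it.

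Even with second moments, two things remain unsupported. First, $a$ is set by the widest component of $\hat p$, and the cited $L^1$ results allow those components to shrink as $\delta\to0$. Second, those results do not give your sandwich on $B_R$. Closing this route would need an explicit construction (uniform approximation on $B_R$ plus a fixed wide background component) and an extra moment hypothesis.

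\textbf{Fix: avoid differential entropies entirely.} Since $\mathcal S_\Theta$ is finite and changing the kernel leaves $\hat q_t^{(N)}$ unchanged, write $I^{(N)}=H(\Theta)-H(\Theta\mid Y_t)$. Here $H(\Theta\mid Y_t)\in[0,\log|\mathcal S_\Theta|]$ is the \emph{discrete} conditional entropy. Set $p_\theta:=\sum_{i\in\mathcal I_\theta}\bar\omega_{i,t}(\theta)k_i$ and define $\hat p_\theta$ likewise. The joint laws $P=\hat q_t^{(N)}(\theta)p_\theta(y)\,dy$ and $Q=\hat q_t^{(N)}(\theta)\hat p_\theta(y)\,dy$ then satisfy $\epsilon:=\mathrm{TV}(P,Q)\le\frac12\max_i\|k_i-\hat k_i\|_1\le\delta/2$.

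Write $P=(1-\epsilon)R+\epsilon P'$ and $Q=(1-\epsilon)R+\epsilon Q'$ with $R=\min(P,Q)/(1-\epsilon)$. Let $H_\bullet$ denote $H(\Theta\mid Y)$ under the indicated law and $h_{\mathrm b}$ the binary entropy. Introduce the mixture label $Z$ and use $H(\Theta\mid Y,Z)\le H(\Theta\mid Y)\le H(\Theta\mid Y,Z)+H(Z)$. This gives $(1-\epsilon)H_R+\epsilon H_{P'}\le H_P\le(1-\epsilon)H_R+\epsilon H_{P'}+h_{\mathrm b}(\epsilon)$, and the same for $Q$. Subtracting,
$|I^{(N)}(\Theta;Y_t\mid h_t;\mathcal K_t)-I^{(N)}(\Theta;Y_t\mid h_t;\hat{\mathcal K}_{J(\delta)})|\le\epsilon\log|\mathcal S_\Theta|+h_{\mathrm b}(\epsilon)$, which vanishes as $\delta\to0$.

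This uses only your $L^1$ approximation at the particles, with no moment, KL, or finite-entropy assumptions. It is the TV-continuity of the MI that the paper's proof invokes without justification. The paper also asserts TV-continuity of the conditional differential entropies, which fails in general for exactly the reason you flagged; only the MI, with $\Theta$ finite, is TV-continuous. With this substitution, your concluding paragraph goes through unchanged.
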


Therefore, with these tractable expressions, the original privacy-utility optimization can be reformulated as a particle belief-state MDP, whose optimal data-sharing parameter $u^*$ satisfies the particle Bellman optimality equation~\eqref{eq:particle_bellman}. 
We next learn the parameterized data-sharing policy via reinforcement learning (RL).

\subsection{Particle filter RL-based solution method}
\label{sec: Particle Filter Reinforcement Learning-Based Solution Method}

With the particle representation of the belief state, we employ an actor–critic algorithm to learn the optimal parameter $u^*$, which induces the optimal policy $\mathcal{K}^*_t(Y_t\mid\theta, x_t,\hat \beta^{(N)}_t) = \mathcal{K}_t(Y_t\mid\theta, x_t,\hat \beta^{(N)}_t;u^*)$. Our proposed algorithm comprises two main components: an MGF encoder and a DDPG learning algorithm.

\emph{(1) Moment-Generating Function (MGF) encoder}. We adopt an MGF encoder \cite{ma2020discriminative} to address two challenges arising from particle-based belief representations. First, a large number of particles are often required in high-dimensional state spaces to achieve high-precision belief estimation, which may make training unstable. Second, the resampling step in PF introduces particle permutations, further complicating convergence.  
The MGF encoder addresses these challenges by capturing higher-order moments and characterizing distributions under mild regularity conditions~\cite{bulmer2012principles}. Specifically, we approximate the MGF of the particle belief $\hat\beta^{(N)}_t$ to obtain a low-dimensional vector
$s_t=\left[\bar s_t,M_{\hat\beta^{(N)}_t}^{1:m}\right]$, 
where $\bar s_t = \left[\sum_{i=1}^N\omega_{i,t}\theta_i,\sum_{i=1}^N\omega_{i,t}x_{i,t}\right]$ is the first-order moment of $\hat\beta^{(N)}_t$. Higher-order information of $\hat\beta^{(N)}_t$ is captured by MGF features evaluated at learned evaluation points $\{v^i\}^m_{i=1}$, yielding $M_{\hat\beta^{(N)}_t}^{1:m} = \left[M_{\hat\beta^{(N)}_t}^1, M_{\hat\beta^{(N)}_t}^2, \ldots, M_{\hat\beta^{(N)}_t}^m\right]$, where $M_{\hat\beta^{(N)}_t}^i$ is the MGF feature evaluated at $v^i$. The MFG encoder $S_t=\mathrm{Enc}_\phi(\hat\beta^{(N)}_t)$ is represented as a fully connected layer with ReLU activation functions, parametrized by $\phi$.

\emph{(2) DDPG-based learning algorithm}. 
We employ a DDPG algorithm involving an actor network and a critic network \cite{lillicrap2015continuous}. The actor network $(\hat\mu_t,\hat\Sigma_t) = f_u(\theta,x_t,s_t)$ instantiates the Gaussian policy $\mathcal K_t\big(\cdot \mid \theta,x_t,\hat\beta_t^{(N)};u\big)$ by generating its mean $\hat\mu_t$ and covariance matrix $\hat \Sigma_t$ from the current state $x_t$, private parameter $\theta$, and MFG features $s_t$. The shared data $y_t$ is then sampled accordingly.  
The critic network, parameterized by $w$, estimates the action-value function
$Q(\theta^*,s_t,x_t,\hat\mu_t,\hat\Sigma_t\,|\,w)$.  Target networks $Q'$ and $f_{u'}$ (and the target encoder $\mathrm{Enc}_{\phi'}$) are updated by Polyak averaging with rate $\tau\in(0,1)$. 
The critic is trained by minimizing the temporal-difference (TD) loss, and the actor parameters $u$ are updated using the deterministic policy gradient. 
The MGF encoder is trained jointly with the critic by backpropagating the TD loss through 
$s_t=\mathrm{Enc}_\phi(\hat\beta^{(N)}_t)$, encouraging a value-predictive (Bellman-consistent) representation. This joint training yields denser and lower-variance learning signals than actor-only updates.

\section{Experiments}
\label{sec:Simulation Settings and Results}
In this section, we conduct numerical simulations to evaluate the proposed privacy-preserving data-sharing framework in a mixed-autonomy platoon control scenario~\cite{yu2025interaction,ZHOU2024104885}.

\subsection{Experimental setup}
\label{subsec: experiment setup}
We consider a three-vehicle mixed-autonomy platoon comprising a leading vehicle (indexed by $0$), a connected and automated vehicle (CAV, indexed by $1$) in the middle, and a human-driven vehicle (HDV,indexed by $2$) at the tail. The HDV acts as a data provider, sharing its real-time spacing and speed to the CAV via vehicle-to-vehicle communication. The CAV uses this information to enhance platoon performance. Although the HDV may benefit from improved platoon operation, it has privacy concerns that the shared data could be exploited by adversaries to infer its driving behavior parameters, potentially leading to degraded safety and economic risks such as increased insurance premiums~\cite{ZHOU2024104885}.

\textbf{Simulation setup}. Similar to \cite{ZHOU2024104885,wang2021leading}, the system dynamics of the
CAV and HDV are defined as:
\begin{align*}
    \dot{s}_{i}(t) & = v_{i-1}(t) - v_{i}(t) + \zeta_{s,i}(t), i = 1,2 \\ 
    \dot{v}_{i}(t) &= 
    \begin{cases} 
        u(t), & i = 1, \\ 
        \mathbf{F}_{\theta^*}\left(s_i(t), v_i(t), v_{i-1}(t)\right) + \zeta_{a,i}(t), &i = 2.
    \end{cases}
\end{align*}
where disturbances $\zeta_a(t)$ and $\zeta_s(t)$ model the discretization error and human modeling error. The control action $u(t)$ for the CAV is the acceleration determined via a distributed linear controller $
    u(t) = \sum_{i=1}^3\left(\mu_{i}\left(s_{i}-s^{\star}\right) + \eta_{i}\left(v_{i}-v^{\star}\right)\right)$, where $\left(s^{\star},v^{\star}\right)$ represents the  equilibrium state, and $\mu_{i}$ and $\eta_{i}$ are feedback gains based on string stability conditions. The acceleration rate of HDV is governed by the car-following model $\mathbf{F}_{\theta^*}$, parameterized by a sensitive parameter $\theta^*$, which is an intrinsic property of a human driver. Here, we adopt the Full Velocity Difference Model (FVD)~\cite{jiang2001full} in the form $
     \mathbf{F}_{\theta^*} (\cdot) = \theta^*_1\left(V_{\text{FVD}}\left(s_3(t)\right)-v_3(t)\right) + \theta^*_2(v_{2}(t)-v_{3}(t))$, where $V_{\text{FVD}}(s)$ is the spacing-dependent desired velocity (see \cite{jiang2001full} for details).  

The deviation term in \eqref{eq: original_optimization} is chosen as the weighted Euclidean distance between shared data $Y_t = (\tilde{v}_t,\tilde{s}_t)$ and true data $X_t = (v_t,s_t)$  with weights $\omega_1\geq \omega_2$, i.e., 
$d(X_t,Y_t) = \sqrt{\omega_1\left(v_t-\tilde{v}_t \right)^2 +\omega_2\left(s_t-\tilde{s}_t\right)^2}$. In our experiment, we set $\omega_1 = 3$ and  $\omega_2 = 1$. The system cost $r\left(\theta^*, X_{t:t+1}, Y_t\right)$ is defined as the HDV's actual fuel consumption rate $f_i$(mL/s) \cite{bowyer1985guide}:
\begin{equation*}
f_i = \begin{cases} 
    0.444 + 0.090 R_i v_i + \left[0.054 a_i^2 v_i\right]_{a_i>0}, & R_i > 0 \\ 
    0.444, & R_i \leq 0 
\end{cases}
\end{equation*}
where $R_i = 0.333 + 0.00108 v_i^2 + 1.200 a_i$, with $v_i$ as the vehicle’s velocity and $a_i$ as its acceleration.

The system dynamics is discretized with a time step of $\Delta t=0.2\,\mathrm{s}$. The simulation environment has the following parameters: equilibrium spacing and velocity $(s^{*},v^{*})=(20\,\mathrm{m},15\,\mathrm{m/s})$, spacing thresholds $(s_{st},s_{go})=(5\,\mathrm{m},35\,\mathrm{m})$, acceleration limits $(a_{\min},a_{\max})=(-5,5)\,\mathrm{m/s}^2$, maximum velocity $v_{\max}=30\,\mathrm{m/s}$, and an AV velocity standard deviation  $\text{std}(\zeta_{a,i})=1\,\mathrm{m/s}^2$. 
For the AV, the linear controller feedback gains are $(\mu_1,\eta_1)=(0.1,0)$ for the AV, $(\mu_2,\eta_2)=(-0.5,0.5)$ for the CAV, and $(\mu_3,\eta_3)=(-0.2,0.2)$ for the HDV.

\textbf{Training configurations}. Each episode has a horizon $L=50$ and an upper bound on expected total distortion $\hat D=200$, with Lagrange multiplier $\lambda=0.2$, privacy weight $\rho=80$, and learned location-vector dimension $D^v=10$. The actor and critic learning rates are $lr_a=10^{-3}$ and $lr_c=10^{-3}$, both networks have 2 layers, and the MGF network consists of 1 layer.

\textbf{Benchmarks}. The proposed data-sharing policy is evaluated against: (i) True Data, where the HDV shares its true data with the CAV, and (ii) Discrete Policy ~\cite{weng2025optimal,erdemir2020privacy,yu2025interaction}, where the HDV adopts an advantage actor-critic-based data-sharing policy that discretizes both the true and shared data into grids. For the proposed method and the Discrete Policy benchmark, the network architectures include a 1-layer MLP for the MGF encoder and 2-layer MLPs for the actor and critic networks.

\subsection{Performance comparison}
\label{subsec: Performance comparison}

In this section, we evaluate the trained policy in terms of privacy preservation, data usability, and system-level cost (i.e., fuel consumption and control smoothness). The sensitive parameter $\theta^*$ is chosen from $\theta^1=(0.4,0.5)$, $\theta^2=(0.7,0.8)$, $\theta^3=(1.0,1.1)$, and $\theta^4=(1.3,1.4)$. All experiments are conducted over a horizon of $50$ time steps. To accurately assess privacy leakage, we use 5,184 particles to track the belief state.

\noindent\textbf{Parameter privacy}. 
Fig.~\ref{fig:posterior_indicator_pairs_2x2groups} shows, for each true parameter $\theta^*$, the evolution of the attacker's posterior mass (left figures) together with the detection outcomes of the maximum-likelihood occupancy estimator (right figures). 

We can see that the posterior mass on $\theta^*$ remains low across all cases, indicating limited privacy leakage, consistent with the frequent misdetections observed in the estimator output.

\begin{figure*}[t]
\centering
\captionsetup[figure]{font=footnotesize, skip=2pt}

\begin{subfigure}[b]{0.48\textwidth}
    \centering
    \includegraphics[width=0.49\linewidth]{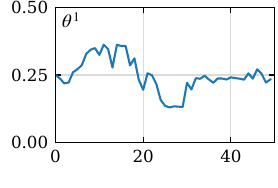}
    \includegraphics[width=0.49\linewidth]{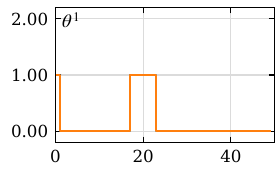}
    \caption{Results for $\theta^1$}
    \label{fig:group_theta1}
\end{subfigure}
\hfill 
\begin{subfigure}[b]{0.48\textwidth}
    \centering
    \includegraphics[width=0.49\linewidth]{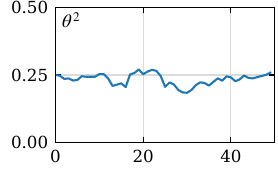}
    \includegraphics[width=0.49\linewidth]{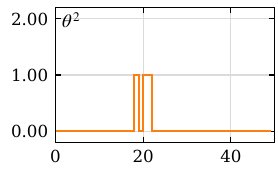}
    \caption{Results for $\theta^2$}
    \label{fig:group_theta2}
\end{subfigure}

\vspace{4mm} 

\begin{subfigure}[b]{0.48\textwidth}
    \centering
    \includegraphics[width=0.49\linewidth]{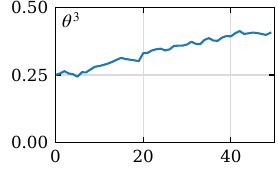}
    \includegraphics[width=0.49\linewidth]{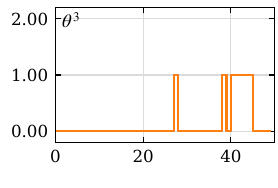}
    \caption{Results for $\theta^3$}
    \label{fig:group_theta3}
\end{subfigure}
\hfill
\begin{subfigure}[b]{0.48\textwidth}
    \centering
    \includegraphics[width=0.49\linewidth]{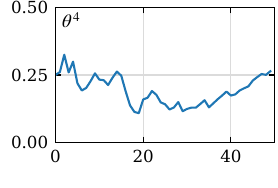}
    \includegraphics[width=0.49\linewidth]{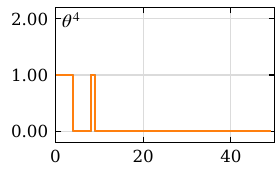}
    \caption{Results for $\theta^4$}
    \label{fig:group_theta4}
\end{subfigure}

\vspace{2mm}
\caption{Privacy performance. For each $\theta^\ast$, the left figure shows the evolution of posterior probability, while the right figure reports the misdetection instances over time.}
\label{fig:posterior_indicator_pairs_2x2groups}
\end{figure*}

\noindent\textbf{Data usability}. To assess data usability, we run a particle filter on the shared data to track the posterior of $x_t$ and report the posterior-mean estimate $\mathbb{E}[x_t\mid y_{1:t}]$. 
Fig.~\ref{fig:usability} (shown for $\theta^*=\theta^1$ due to space) indicates that the posterior mean closely matches the true velocity and spacing, with only a brief initial transient caused by prior initialization and limited early observations. 
We observe similar tracking performance for other values of $\theta^*$, confirming that the proposed data-sharing mechanism preserves state-estimation utility.
\begin{figure}[t]
    \centering

    \includegraphics[width=\linewidth]{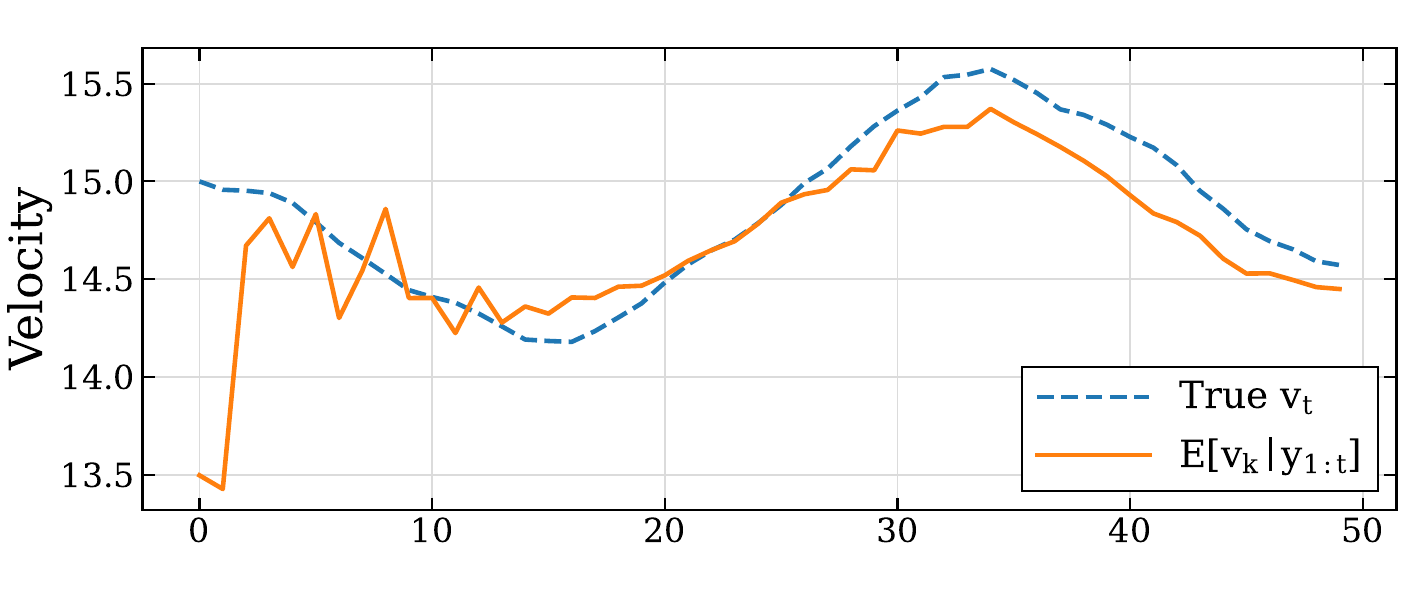}\\[-2pt]
    \vspace{2pt}

    \includegraphics[width=\linewidth]{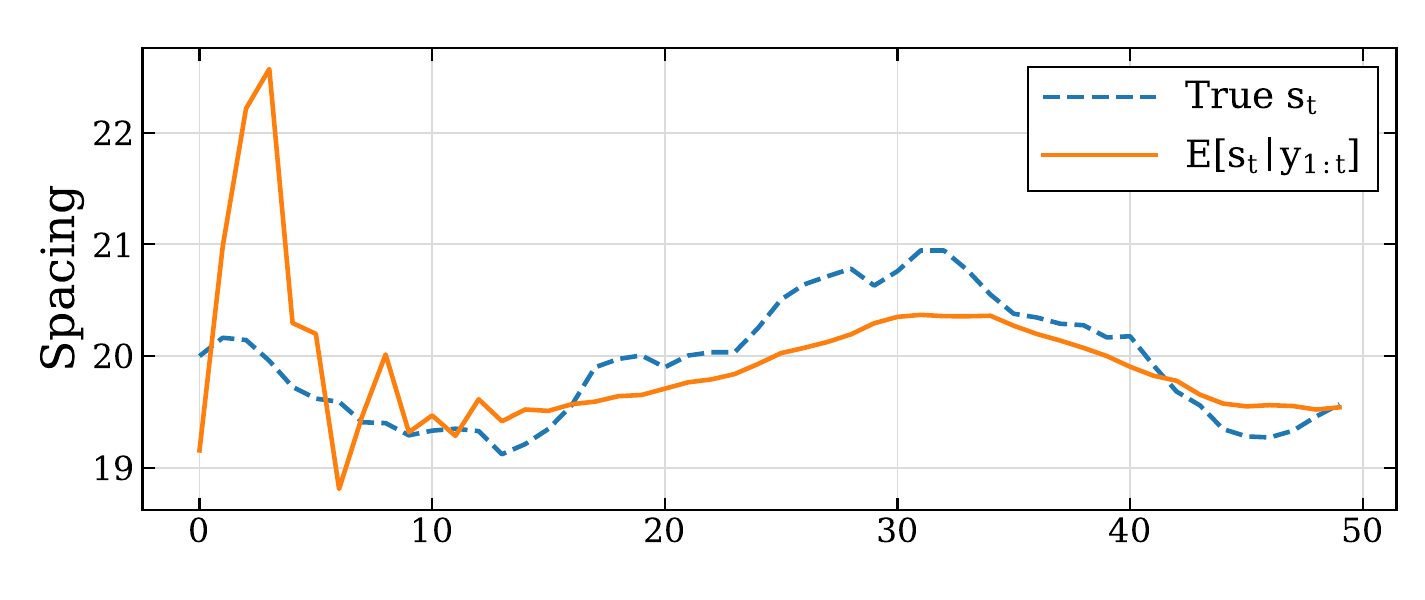}

    \caption{Data-usability: true states versus posterior mean estimates.}
    \label{fig:usability}
\end{figure}

\noindent\textbf{System cost}. 
We compare the HDV's fuel consumption performance (mL/s) and the smoothness of the resulting motion before and after applying the discrete and continuous policy. The fuel consumption results are summarized in Table~\ref{tab: fuel consumption}. The results indicate that for different driving behaviors $\Theta$, the continuous policy consistently yields lower expected fuel consumption than the discrete policy and stays very close to the “true data” baseline. Table~\ref{tab:acc-metrics} reports two smoothness measures: 
the average squared acceleration
$\mathbb{E}[a^2]$ and the average squared acceleration change $\mathbb{E}[(\Delta a)^2]$. We can see that across all true parameters, the continuous policy consistently reduces actuation intensity relative to the discrete policy. Specifically, $\mathbb{E}[a^2]$ under the continuous policy is about $1.15\times$--$1.55\times$ smaller than under the discrete policy, and is consistently closer to the true-data baseline.
Moreover, the continuous policy also yields smoother control, as $\mathbb{E}[(\Delta a)^2]$ is lower than that of the discrete policy for all $\theta^*$, again remaining closer to the true-data baseline. 
Overall, the continuous policy achieves lower acceleration magnitude and smaller acceleration variations than the discrete policy, indicating a clear system-cost advantage in terms of actuation efficiency and smoothness.
\begin{table}[htp]
  \centering
  \caption{Expected Fuel Consumption (ml/s)}
  \label{tab: fuel consumption}
  \small
  \setlength{\tabcolsep}{4pt}
  \begin{tabular}{lccc}
    \toprule

    $\theta^*$ & True Data & Discrete Policy & Continuous Policy \\ 
    \midrule
    $\theta^1$ & 1.24      & 1.29
          & 1.25 \\
    $\theta^2$ & 1.24
      & 1.30
          & 1.25 \\
    $\theta^3$ & 1.25
 & 1.34
 & 1.26 \\
    $\theta^4$ & 1.26
          & 1.35
              & 1.27     \\ 
    \bottomrule
  \end{tabular}
\end{table}

\begin{table}[htp]
  \centering
  \caption{Expected Acceleration Metrics}
  \label{tab:acc-metrics}
  
  \setlength{\tabcolsep}{3pt}                 

  \begin{tabular}{lcccccc}
    \toprule
    & \multicolumn{2}{c}{True Data}
    & \multicolumn{2}{c}{Discrete Policy}
    & \multicolumn{2}{c}{Continuous Policy} \\
    \cmidrule(lr){2-3}\cmidrule(lr){4-5}\cmidrule(lr){6-7}
    $\theta^*$ & $\mathbb{E}[a^2]$ & $\mathbb{E}[(\Delta a)^2]$
               & $\mathbb{E}[a^2]$ & $\mathbb{E}[(\Delta a)^2]$
               & $\mathbb{E}[a^2]$ & $\mathbb{E}[(\Delta a)^2]$ \\
    \midrule
    $\theta^1$ & 0.04 & 0.03 & 0.11 & 0.04 & 0.10 & 0.03 \\
    $\theta^2$ & 0.05 &0.04 & 0.13 & 0.06 & 0.11 & 0.05\\
    $\theta^3$ & 0.06 & 0.06 & 0.16 & 0.09 & 0.12 & 0.07 \\
    $\theta^4$ & 0.08     & 0.09    & 0.19     & 0.15     & 0.12    & 0.10\\
    \bottomrule
  \end{tabular}
\end{table}

\section{Conclusion}

In this paper, we studied parameter-privacy-preserving data sharing in continuous-state dynamical systems. We develop a particle-belief MDP for privacy-preserving data sharing by approximating the intractable belief dynamics via sequential Monte Carlo. The resulting formulation is tractable and asymptotically consistent as the number of particles grows. We further derive a regime-adaptive closed-form upper bound on particle-based MI using Gaussian-mixture approximations, enabling efficient optimization of continuous data-sharing policies. Experiments on a mixed-autonomy platoon demonstrated that the learned policy substantially impedes inference of human-driving behavior parameters while maintaining data usability and incurring negligible performance degradation. Future work will address privacy-aware high-dimensional data sharing and multi-stakeholder data-sharing mechanisms.

\begin{ack}                               This research was supported by the Singapore Ministry of Education (MOE) under its Academic Research Fund Tier 2 (A-8003064-00-00).
\end{ack}

\bibliographystyle{unsrt}        
\bibliography{autosam}

\appendix

\section{Proof for Simplification of MI in Section~\ref{sec:Problem Statement}}
\label{appendix: MI simplification}
By the chain rules of conditional MI, we have:
\begin{align}
    &I^{\boldsymbol{\pi}}(\Theta;Y_{1:T},W_{1:T})  
    =   \sum_{t=1}^T I^{\boldsymbol{\pi}}(\Theta;Y_t,W_{t}\mid Y_{1:t-1},W_{1:t-1}) \nonumber\\
     =~& \sum_{t=1}^T \Big(I^{\boldsymbol{\pi}}(\Theta; Y_t\mid Y_{1:t-1},W_{1:t}) \notag\\&\quad+ I^{\boldsymbol{\pi}}(\Theta; W_t\mid Y_{1:t-1},W_{1:t-1})\Big) \label{eq:MI1}
\end{align}
We only need to show $I^{\boldsymbol{\pi}}(\Theta; W_t \mid Y_{1:t-1}, W_{t-1}) = 0$. 
This is because $W_t$ is independent of $\Theta$ given $(W_{1:t-1}, Y_{1:t-1})$, as by the dynamics of $W_t$, the impact of $\Theta$ on $W_t$ is mediated through $(W_{1:t-1}, Y_{1:t-1})$.

\section{Proofs in  Sec.~\ref{sec: A Computationally Efficient Privacy-Preserving Policy}}

\subsection{Proof for Theorem~\ref{th:simplified policy Theorem}}
\label{appendix:simplified policy Theorem}
To prove Theorem~\ref{th:simplified policy Theorem}, we first show 
\begin{align}
p^{\pi}(\Theta,X_{t:t+1},W_{1:t},Y_{1:t})= p^{\pi_s}(\Theta,X_{t:t+1},W_{1:t},Y_{1:t}). \label{eq:full prob}
\end{align}
To this end, we decompose $ p^{\pi}(\Theta,X_{t:t+1},W_{1:t},Y_{1:t})$ as 
\begin{align}
& p^{\pi}(\Theta, X_{t:t+1}, W_{1:t}, Y_{1:t}) \notag\\
=~&p\!\left(X_{t+1} \mid \Theta, X_t, W_t, Y_{1:t}\right) \,
          p^{\pi}\!\left(\Theta, X_t, W_{1:t}, Y_{1:t}\right) \notag\\
 = ~&p\!\left(X_{t+1} \mid \Theta, X_t, W_t\right) \,
          \pi_t\!\left(Y_t \mid \Theta, X_t, W_{1:t}, Y_{1:t-1}\right) \,
          \notag\\&~ \cdot p^{\pi}\!\left(\Theta, X_t, W_{1:t}, Y_{1:t-1}\right) \notag\\
= ~&p\!\left(X_{t+1} \mid \Theta, X_t, W_t\right) \,
          \pi_t\!\left(Y_t \mid \Theta, X_t, W_{1:t}, Y_{1:t-1}\right) \notag\\
&~\cdot  p\!\left(W_t \mid W_{t-1}, Y_{t-1}\right) \,
          p^{\pi}\!\left(\Theta, X_t, W_{1:t-1}, Y_{1:t-1}\right) \notag\\
=~& p\!\left(X_{t+1} \mid \Theta, X_t, W_t\right) \,
          \pi_t\!\left(Y_t \mid \Theta, X_t, W_{1:t}, Y_{1:t-1}\right) \notag\\
&~\cdot  p\!\left(W_t \mid W_{t-1}, Y_{t-1}\right) \,
          \notag\\
&~\cdot  \int_{x_{t-1}} p^{\pi}\!\left(\Theta, X_{t-1}, x_t, W_{1:t-1}, Y_{1:t-1}\right) dx_{t-1}.
\label{eq:decomposion}
\end{align}
where the second equation follows the Markov property of the system dynamics of the data provider.

For any $\pi_t\in \Pi$, we can choose a simplified policy $\pi^s_t \in \Pi_s$  such that:
\begin{align}
    & \pi^s_t(Y_t|\Theta,X_t,W_{1:t},Y_{1:t-1})\notag\\=~& \pi_{t,Y_t|\Theta,X_t,W_{1:t},Y_{1:t-1}}(Y_t|\Theta,X_t,W_{1:t},Y_{1:t-1}).
\end{align}
We use induction to show \eqref{eq:full prob}. For $t=1$, $\pi_1(Y_1\mid \Theta, X_{1},W_{1}) = \pi^s_1(Y_1\mid \Theta, X_{1},W_{1})$ holds because we have $p^{\pi}(\Theta,X_{1},X_{2},W_{1},Y_{1}) = p(X_{2}\mid \Theta, X_{1},W_{1}) \pi_1(Y_1\mid \Theta, X_{1},W_{1})p(\Theta, X_{1},W_{1}) $. Thus, the induction can be initialized as $
    p^{\pi}(\Theta,X_{1},X_{2},W_{1},Y_{1}) = p^{\pi^s}(\Theta,X_{1},X_{2},W_{1},Y_{1})$. 
At the induction step, assuming $ p^{\pi}(\Theta,X_{t-1:t},W_{1:t-1},Y_{1:t-1})\allowbreak = p^{\pi_s}(\Theta,\allowbreak X_{t-1:t},W_{1:t-1},Y_{1:t-1})$, \eqref{eq:full prob} follows naturally from  \eqref{eq:decomposion}. 

With \eqref{eq:full prob}, we show the equivalence of MI. We have  
\begin{align}
    p^{\pi}(\Theta,W_{1:t},Y_{1:t}) &=\int_{x_{t:t+1}}p^{\pi}(\Theta,x_{t:t+1},W_{1:t},Y_{1:t}) dx_{t:t+1}\notag\\
    &= \int_{x_{t:t+1}}p^{\pi^s}(\Theta,x_{t:t+1},W_{1:t},Y_{1:t}) dx_{t:t+1}\notag\\&= p^{\pi^s}(\Theta,W_{1:t},Y_{1:t}), \label{eq:simplification_1}
\end{align}
which leads to $I^{\pi}(\theta; Y_{1:t}, W_{1:t}) = I^{\pi_s}(\theta; Y_{1:t}, W_{1:t})$ by the definition of MI. 

We then show the equivalence of control performance under policies $\pi$ and $\pi_s$, i.e., $\mathbb{E}^{\boldsymbol{\pi}}[r(\theta^*,X_t,X_{t+1},Y_t)]$. Note that 
\begin{align}
     &p^{\pi}(\theta^*,X_t,X_{t+1},Y_t)\notag \\ = &\int_{w_{1:t},y_{1:t-1}}p^{\pi}(\theta^*,X_{t:t+1},w_{1:t},Y_t,y_{1:t-1})dy_{1:t-1}dw_{1:t} \notag\\
     = & \int_{w_{1:t},y_{1:t-1}}p^{\pi_s}(\theta^*,X_{t:t+1},w_{1:t},Y_t,y_{1:t-1}) dy_{1:t-1}dw_{1:t}\notag\\
      = & p^{\pi^s}(\theta^*,X_t,X_{t+1},Y_t), 
\end{align}
where the second equation results from \eqref{eq:full prob}. Then, the equivalence of control performance can be proved by the definition of expectation. 

Similarly, we only need to prove $p^{\pi}(X_t,Y_t) = p^{\pi^s}(X_t,Y_t)$ for the equivalence of data deviation constraint under policies $\pi$ and $\pi_s$. This follows from \eqref{eq:simplification_1} by calculating the marginal probability on both sides of the equation.  

\subsection{Derivation of Belief State Update}
\label{appendix:update for belief}
The belief state update is derived using Bayes' rule. Starting with the posterior probability $\beta_{t+1}(\Theta,X_{t+1})$, we obtain the recursive update equation given in \eqref{eq: final_belief_update}.

\begin{figure*}[!t] 
    \centering
    \begin{align}
    & \beta_{t+1}(\Theta,X_{t+1}) = p(\Theta,X_{t+1}\mid W_{1:t+1},Y_{1:t}) = \frac{p(\Theta,X_{t+1}, W_{t+1}, Y_{t}\mid W_{1:t},Y_{1:t-1})}{p(W_{t+1},Y_t\mid  W_{1:t},Y_{1:t-1})}\notag\\
    &=\frac{p(W_{t+1} \mid W_{t},Y_{t})p(\Theta,X_{t+1}, Y_{t}\mid W_{1:t},Y_{1:t-1})}{p(W_{t+1} \mid W_{t},Y_{t})p(Y_t\mid  W_{1:t},Y_{1:t-1})} =\frac{p(\Theta,X_{t+1}, Y_{t}\mid W_{1:t},Y_{1:t-1})}{p(Y_t\mid  W_{1:t},Y_{1:t-1})}  \label{eq: forth_eq}\\
    & = \frac{\int_{x_t}p(\Theta,x_t,X_{t+1},Y_{t}\mid W_{1:t},Y_{1:t-1}) dx_t}{\int_{x_{t:t+1}}\int_{\theta}p(\theta,x_{t:t+1},Y_t\mid  W_{1:t},Y_{1:t-1})d\theta dx_{t:t+1}} = \frac{\int_{x_t}p(X_{t+1}\mid \Theta,x_t,W_t)p(\Theta,x_t,Y_{t}\mid W_{1:t},Y_{1:t-1})dx_t}{\int_{x_{t:t+1}}\int_{\theta}p(x_{t+1}\mid \theta,x_{t},W_t)p(\theta,x_{t},Y_t\mid  W_{1:t},Y_{1:t-1})d\theta dx_{t:t+1}} \notag\\
    &= \frac{\int_{x_t}p(X_{t+1}\mid \Theta,x_t,W_t) p(Y_{t}\mid \Theta,x_t,W_{1:t},Y_{1:t-1})p(\Theta,x_t\mid W_{1:t},Y_{1:t-1})dx_t}{\int_{x_{t:t+1}}\int_{\theta}p(x_{t+1}\mid \theta,x_{t},W_t) p(Y_{t}\mid \theta,x_t,W_{1:t},Y_{1:t-1})p(\theta,x_{t}\mid  W_{1:t},Y_{1:t-1}) d\theta dx_{t:t+1}}  \label{eq: belief_update} \\
    & = \frac{\int_{x_t}p(X_{t+1}\mid \Theta,x_t,W_t)\pi^s_t(Y_{t}\mid \Theta,x_t,W_{1:t},Y_{1:t-1})\beta_t(\Theta,x_t)dx_t}{\int_{x_{t:t+1}}\int_{\theta}p(x_{t+1}\mid \theta,x_{t+1},W_t)\pi^s_t(Y_{t}\mid \theta,x_t,W_{1:t},Y_{1:t-1})\beta_t(\theta,x_{t})d\theta dx_{t:t+1}} \label{eq: final_belief_update}.
\end{align}
\end{figure*}

Note that the third equation \eqref{eq: forth_eq} results from the following relationship 
\begin{align*}
    &p(W_{t+1} \mid \Theta, X_{t:t+1}, W_{1:t}, Y_{1:t})  = p(W_{t+1} \mid W_t, Y_t),
\end{align*}
which holds since $W_{t+1}$ is conditional independent of $(\Theta,X_{t:t+1},Y_{1:t-1},W_{1:t-1})$, given $(W_t,Y_t)$. This enables further decomposing $\beta_{t+1}(\Theta, X_{t+1})$ in \eqref{eq: final_belief_update}.

\subsection{Proof for Lemma~\ref{le:bellman equiv}}
\label{appendix:proof for bellman equiv}

According to policy $\pi^s_t$ and belief state $\beta_t$, the Bellman optimality equation \eqref{eq:Bellman optimality equation} can be decomposed into an immediate cost term and an expected future value. Since $h_t$ is given, we simplify the notation $\pi^s_t(Y_t\mid \Theta, X_t, h_t)$ to $\pi^s_t(Y_t\mid \Theta, X_t)$. The Bellman equation is given by:
\begin{align}\label{eq:Bellman optimality equation decomposed}
V^{*}_t(h_{t}) = \min_{\pi^{s}_t} \left\{ C_t(h_{t},\pi^s_t) + \mathbb{E}\left[V^{*}_{t+1}(h_{t+1}) \mid h_{t}, \pi^s_t \right] \right\},
\end{align}
where the expectation is taken with respect to the observation $Y_t$ and the exogenous noise $W_{t+1}$.
We define the immediate cost $C_t(h_{t},\pi^s_t)$ as the sum of the privacy loss (MI), the system cost, and the penalty of data usability:

\begin{align}
    &C_t(h_{t},\pi^t_s)  = \int_{y_t,\theta}\int_{x_t}\pi^{s}_t(Y_t\mid\theta,x_t)\beta_t(\theta,x_t)dx_t
\notag\\ &~ \cdot\log\frac{\int_{x_t}\pi^{s}_t(Y_t\mid\theta,x_t)\beta_t(\theta,x_t)dx_t}{\beta_t(\theta)\int_{\theta}\int_{x_t}\pi^{s}_t(Y_t\mid\theta,x_t)\beta_t(\theta,x_t)dx_td\theta} d\theta dy_t  \notag\\&~ +\int_{x_{t:t+1}}\int_{y_t}r(\theta^*,x_{t:t+1},y_t)p(x_{t+1}\mid \theta^*,x_t,w_t)
\notag\\ &~ \cdot \pi^{s}_t(Y_t\mid\theta^*,x_t)\beta_t(\theta^*,x_t)dy_tdx_{t:t+1} \notag\\&
~+ \lambda \int_{x_t}\int_{\theta,y_t}\pi^{s}_t(Y_t\mid\theta,x_t)\beta_t(\theta,x_t)d(x_t,y_t) dy_{t} d\theta dx_t \notag\\ & ~ - \lambda\hat{D}
\end{align}

From each part of $C_t$, we can see that given history $h_t$, $C_t$ is determined by belief state $\beta_t$ and policy $\pi^s_t$. Thus, we use $C_t(h_{t},\pi^t_s)$ to denote the immediate cost $C_t$ when taking policy set $\pi^s_t$ under history $h_t$.

Based on this dependency, we can use induction to assume that $V^{*}_{t+1}(h_{t+1})$ also depends on belief state $\beta_{t+1}$. Then, we rewrite the Bellman optimality equation as:
\begin{align}\label{eq:Bellman optimality equation}
    V^{*}_t(h_{t}) & = \min_{\pi^{s}_t}\{C_t(h_{t}, \pi^{s}_t)
   + \mathbb{E}(V^{*}_{t+1}(\beta_{t+1}\mid h_{t}))\}.
\end{align}
Using $\beta_{t+1} = \Phi(\beta_t,\pi^s_t,y_t)$ to compactly represent this recursion, we can decompose the second term $\mathbb{E}(V^{*}_{t+1}(\beta_{t+1}\mid h_{t}))$ as:
\begin{align}
    &\mathbb{E}(V^{*}_{t+1}(\beta_{t+1}\mid h_{t}))= \int_{y_t}p(y_t\mid h_{t})V^{*}_{t+1}(\beta_{t+1})dy_t
    \notag\\ =&\int_{\theta,x_t,y_t}\pi^s_t(y_t\mid\theta,x_t)\beta_t(\theta,x_t)V^{*}_{t+1}(\Phi(\beta_t,\pi^s_t,y_t))dx_td\theta dy_t,\notag
\end{align}
which also depends on $\beta_t$. Thus, combining this term with $C_t(h_{t},\pi^t_s)$ and using induction, we can prove that $ V^{*}_t(h_{t})$ depend on belief state $\beta_t$.

\section{Proofs of Section~\ref{sec: Particle-Belief Formulation}}

\subsection{Proof for Lemma~\ref{lem: weight_update}}\label{appendix:weight_update_derivation} 
We approximate the belief state $\beta_t$ by the particle measure $\hat{\beta}_t^{(N)}$ defined in~\eqref{eq:particle_measure_def},
i.e., $\hat{\beta}_t^{(N)}(d\theta,dx)=\sum_{i=1}^N \omega_{i,t}\delta_{(\theta_i,x_{i,t})}(d\theta,dx)$,
which provides a discrete weighted approximation to $p(\Theta,X_t\mid Y_{1:t},W_{1:t})$.

The weights $\{\omega_{i,t}\}_{i=1}^N$ are chosen using the principle of importance sampling. Suppose $p(\Theta,X_t\mid Y_{1:t-1},W_{1:t})$ is a probability density from which it is difficult to draw samples. We define the easier-to-sample proposal joint distribution (i.e. importance density) of $\theta_{i}$ and $X_t$ as  $q(\theta_{i},X_{i,t}\mid Y_{1:t-1},W_{1:t})$. We get the sample of the new distribution $(\theta_{i},X_{i,t})$ and update the weight $\omega_{i,t}$ as 
\( \displaystyle\omega_{i,t} \propto \frac{p(\theta_{i},X_{i,t}\mid Y_{1:t-1},W_{1:t})}{q(\theta_{i},X_{i,t}\mid Y_{1:t-1},W_{1:t})}\). The proposal joint distribution $q(\Theta,X_t\mid W_{1:t},Y_{1:t-1})$ is chosen to have the a factorized form following the Bayesian rule:
\begin{align}
    &q(\Theta,X_t\mid W_{1:t},Y_{1:t-1}) = q(X_t\mid \Theta,X_{t-1},W_{1:t},Y_{1:t-1})\notag\\&~\cdot q(\Theta,X_{t-1}\mid W_{1:t-1},Y_{1:t-2})
    \label{eq: q_fac}
\end{align}
Similarly, the joint density $p(\Theta,X_{t}\mid Y_{1:t},W_{1:t})$ can be factorized as:
\begin{align}
\label{eq: p_fac}
    &p(\Theta,X_{t}\mid Y_{1:t},W_{1:t})    
    \propto p(\Theta,X_{t-1}\mid W_{1:t-1},Y_{1:t-2})\notag\\ & ~\cdot p(W_{t}\mid W_{t-1},Y_{t-1})p(X_t\mid \Theta,X_{t-1},W_{t-1})\notag\\&~\cdot p(Y_{t-1}\mid \Theta,X_{t-1},W_{1:t-1},Y_{1:t-2})
\end{align}

Using the factorized form in Eqs.~\eqref{eq: q_fac} and \eqref{eq: p_fac}, we can derive the recursive weights update scheme as in \eqref{eq: weight update process}.
\begin{figure*}
\centering
\begin{align}
\label{eq: weight update process}
    \omega_{i,t} \propto \frac{\omega_{i,t-1}p(W_{t}\mid W_{t-1},Y_{t-1})p(Y_{t-1}\mid \theta_{i},X_{i,t-1},W_{1:t-1},Y_{1:t-2})p(X_{i,t}\mid \theta_{i},X_{i,t-1},W_{t-1})}{q(X_{i,t}\mid \theta_i,X_{i,t-1},W_{1:t},Y_{1:t-1})}
\end{align}
\end{figure*}

Assume the density $q(X_{i,t}\mid \theta_i,X_{i,t-1},W_{1:t}\allowbreak,Y_{1:t-1}) = p(X_{i,t}\mid \theta_i,X_{i,t-1}, W_{1:t-1})$, the final process of weights update can be concluded
\begin{align}
    \omega_{i,t} \propto& \omega_{i,t-1}p(W_{t}\mid W_{t-1},Y_{t-1})\notag\\&~\cdot p(Y_{t-1}\mid \theta_{i},X_{i,t-1},W_{1:t-1},Y_{1:t-2}).
\end{align} 
\subsection{Proof for Lemma~\ref{lem:MI-conv}}\label{appendix:lem_MI-conv}
We first express $C_t^{(N)}
(\hat\beta_t^{(N)},\mathcal K_t)$ explicitly in terms of the particle belief components. Since $\Theta$ takes values in a finite set, given a belief $\beta(\theta^j,dx)=q(\theta^j)b(dx\mid\theta^j)$, define the induced density
\begin{align}
&f_{Y\mid\Theta}(y\mid\theta^j;\beta,u) =\int k_t(y\mid\theta^j,x,\beta;u)b(dx\mid\theta^j),
\notag\\
&p_Y(y;\beta,u)=\sum_{j=1}^K q(\theta^j)f_{Y\mid\Theta}(y\mid\theta^j;\beta,u).\notag
\end{align}

Applying the particle belief $\hat\beta_t^{(N)}$, we obtain
\begin{align}
&f_{Y\mid\Theta}^{(N)}(y\mid\theta^j;u)
=\int k_t(y\mid\theta^j,x,\hat\beta_t^{(N)};u)\hat b_t^{(N)}(dx\mid\theta^j),\notag 
\\
&p_Y^{(N)}(y;u)
=\sum_{j=1}^K \hat q_t^{(N)}(\theta^j) f_{Y\mid\Theta}^{(N)}(y\mid\theta^j;u).\notag
\end{align}
Accordingly, the stepwise MI under the particle belief $\beta^{N}_t$ and  parameter $u$ is
\begin{align}\label{eq:MI-discrete-theta-simpler}
I^{(N)}(\Theta;Y_t\mid h_t,u)
= \sum_{j=1}^K &\hat q_t^{(N)}(\theta^j)
\int
f_{Y\mid\Theta}^{(N)}(y\mid\theta^j;u)\notag\\&
\log\frac{f_{Y\mid\Theta}^{(N)}(y\mid\theta^j;u)}{p_Y^{(N)}(y;u)}dy.
\end{align}

Similarly, the particle approximations of the system-cost term $C_{t,1}^{(N)}$ and the
data-distortion term $C_{t,2}^{(N)}$ are obtained by replacing $\beta_t$ with $\hat\beta_t^{(N)}$. Let $x'$ be a shorthand for $x_{t+1}$. Then
\begin{subequations}
    \begin{align}
        C_{t,1}^{(N)}(\hat\beta_t^{(N)},u)
=&\iiint r(\theta^*,x,x',y)
p(dx'\mid \theta^*,x,w_t)\notag\\&
\cdot k_t(y\mid \theta^*,x,\hat\beta_t^{(N)};u)dy
\hat\beta_t^{(N)}(\theta^*,dx) \notag
    \end{align}
    \begin{align}
        C_{t,2}^{(N)}(\hat\beta_t^{(N)},u)
=&\sum_{\theta\in\mathcal S_\Theta}\iint d(x,y)
k_t(y\mid \theta,x,\hat\beta_t^{(N)};u)\notag\\&\cdot dy
\hat\beta_t^{(N)}(\{\theta\},dx)
-\lambda\hat D \notag
    \end{align}
\end{subequations}

For brevity, we only show the convergence of $I^{(N)}(\Theta;Y_t\mid h_t,u)$. The proof of the other terms $ C_{t,1}^{(N)}(\hat\beta_t^{(N)},u)$ and $ C_{t,2}^{(N)}(\hat\beta_t^{(N)},u)$ are similar. 

Since $\mathcal S_\Theta=\{\theta^1,\ldots,\theta^K\}$ is finite, the joint weak convergence in Lemma~\ref{lem:pf-consistency} immediately yields the weak convergence of the marginal and
conditional belief components.

For each $j$, define the marginal measures on $\mathsf X$ by $\hat\beta_{t,j}^{(N)}(A):=\hat\beta_t^{(N)}(\{\theta^j\}\times A)
$ and $\beta_{t,j}(A):=\beta_t(\{\theta^j\}\times A)$.
Take any bounded continuous function $g$ on $\mathsf X$ and define
$h(\theta,x)=g(x)\mathbf 1_{\{\theta=\theta^j\}}$.
The indicator $\mathbf 1_{\{\theta=\theta^j\}}$ is continuous on $\Theta$,
so $h$ is also bounded continuous.
By $\hat\beta_t^{(N)}\Rightarrow\beta_t$ almost surely, we have 
\[\int h(\theta,x)\,\hat\beta_t^{(N)}(d\theta,dx)
\longrightarrow
\int h(\theta,x)\,\beta_t(d\theta,dx).\]
almost surely. Taking $g\equiv 1$ gives $\hat q_t^{(N)}(\theta^j)=\hat\beta_{t,j}^{(N)}(\mathsf X)\to
\beta_{t,j}(\mathsf X)=q_t(\theta^j)$ almost surely for all $j$.
Moreover, if $q_t(\theta^j)>0$
\begin{align}
     &\int g(x)\,\hat b_t^{(N)}(dx\mid\theta^j) =\frac{\int g\,d\hat\beta_{t,j}^{(N)}}{\hat q_t^{(N)}(\theta^j)} \notag\\&\longrightarrow \frac{\int g\,d\beta_{t,j}}{q_t(\theta^j)} =\int g(x)\,b_t(dx\mid\theta^j),\qquad \text{a.s.}\notag
\end{align}
which shows $\hat b_t^{(N)}(\cdot\mid\theta^j)\Rightarrow b_t(\cdot\mid\theta^j)$ almost surely.

Next, fix $\theta^j$ and $y$.
By Assumption~\ref{assum:regularity}(ii)--(iii), the map $x\mapsto k_t(y\mid\theta^j,x,\beta;u)$ is continuous on $\mathsf X$ and satisfies
$\underline k_t(y)\le k_t(y\mid\theta^j,x,\beta;u)\le \overline k_t(y)$ for all $x$.
Therefore, using the conditional weak convergence above, we obtain 
\begin{align}
    &f_{Y\mid\Theta}^{(N)}(y\mid\theta^j;u)
=\int k_t(y\mid\theta^j,x,\hat\beta_t^{(N)};u)\,\hat b_t^{(N)}(dx\mid\theta^j)
\notag\\&\longrightarrow
\int k_t(y\mid\theta^j,x,\beta_t;u)\,b_t(dx\mid\theta^j)
=f_{Y\mid\Theta}(y\mid\theta^j;u)\notag
\end{align}
almost surely. Combining this with $\hat q_t^{(N)}(\theta^j)\to q_t(\theta^j)$ yields the marginal
density convergence
\begin{align}
    &p_Y^{(N)}(y;u)=\sum_{j=1}^K \hat q_t^{(N)}(\theta^j)\,f_{Y\mid\Theta}^{(N)}(y\mid\theta^j;u)
\notag\\ \longrightarrow
&\sum_{j=1}^K q_t(\theta^j)\,f_{Y\mid\Theta}(y\mid\theta^j;u)
=p_Y(y;u),
\qquad \text{a.s.}\notag
\end{align}
Finally, the envelope bounds are inherited by averaging:
\begin{subequations}\label{eq: uniform_bound}
\begin{equation}
    \underline k_t(y)\le f_{Y\mid\Theta}^{(N)}(y\mid\theta^j;u)\le \overline k_t(y),
\end{equation}
\begin{equation}
    \underline k_t(y)\le p_Y^{(N)}(y;u)\le \overline k_t(y),
\end{equation}
\end{subequations}
and the same bounds hold for the limits.
Then define the pointwise log-ratio $\displaystyle \ell_N(\theta^j,y;u)
= \log\frac{f_{Y\mid\Theta}^{(N)}(y\mid\theta^j;u)}{p_Y^{(N)}(y;u)}$ and $\displaystyle\ell(\theta^j,y;u)
= \log\frac{f_{Y\mid\Theta}(y\mid\theta^j;u)}{p_Y(y;u)}$. For each $\theta^j$ and each $y$,
$\ell_N(\theta^j,y;u)\to \ell(\theta^j,y;u)$ almost surely.
Moreover, by the uniform bounds \eqref{eq: uniform_bound}, 
\begin{align}
|\ell_N(\theta^j,y;u)| &= \Big|\log\frac{f_{Y\mid\Theta;u}^{(N)}(y\mid\theta^j;u)}{p_Y^{(N)}(y;u)}\Big|\notag\\&\leq |\log \overline  k_t(y)| +|\log \underline k_t(y)|\notag
\end{align}
Therefore, the integrand in \eqref{eq:MI-discrete-theta-simpler} is dominated by $\displaystyle \Big|\hat q_t^{(N)}(\theta^j)\, f_{Y\mid\Theta}^{(N)}(y\mid\theta^j)\,\ell_N(\theta^j,y)\Big|
\le
\overline k_t(y)\Big(|\log \overline k_t(y)|+|\log \underline k_t(y)|\Big)$. Define \(g_t =\overline k_t(y)\Big(|\log \overline k_t(y)|+|\log \underline k_t(y)|\Big)\), and $g_t$ is integrable on $\mathbb{R}^N$ by Assumption~\ref{assum:regularity}(ii). Since $\ell_N(\theta^j,y)\to \ell(\theta^j,y)$ pointwise a.s., the dominated convergence theorem (DCT)
applied to \eqref{eq:MI-discrete-theta-simpler} yields
$I^{(N)}(\Theta;Y_t\mid h_t)\to I(\Theta;Y_t\mid h_t)$ almosy surely.

\subsection{Proof for Theorem~\ref{thm:asymptotic_convergence}} \label{app:proof_thm_asymp}

By the proof sketch, we only need to prove $\Delta_N^{(C)}$ and $\Delta_N^{(V)}$ uniformly converge to 0. By Lemma~\ref{lem:MI-conv}, $C^{(N)}_t(\hat\beta_t^{(N)},u)\to C_t(\beta_t,u)$ almost surely for each $u\in\mathcal U_t$.
The following lemma allows us to strengthen this pointwise convergence to uniform convergence over $\mathcal U_t$.

\begin{lem}[Pointwise-to-uniform upgrade]\label{lem:ptwise_to_unif}
Let $\mathcal U$ be compact. Suppose $f_N:\mathcal U\to\mathbb R$ are equicontinuous and
$f_N(u)\to f(u)$ pointwise on $\mathcal U$, where $f$ is continuous. Then
$\sup_{u\in\mathcal U}|f_N(u)-f(u)|\to 0$.
\end{lem}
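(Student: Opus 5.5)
The plan is the classical Arzelà–Ascoli-type $\varepsilon/3$ argument. It uses only three ingredients: compactness of $\mathcal U$, equicontinuity of $\{f_N\}$, and continuity of the limit $f$.

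\emph{Step 1: equicontinuity at a common scale.} Fix $\varepsilon>0$. For each $u\in\mathcal U$, equicontinuity gives a radius $\delta_u>0$ such that $|f_N(v)-f_N(u)|<\varepsilon/3$ for all $N$ whenever $\|v-u\|<\delta_u$. Continuity of $f$ lets us shrink $\delta_u$ so that also $|f(v)-f(u)|<\varepsilon/3$ on the same ball. If equicontinuity is only assumed pointwise, we work with these balls directly. If it is assumed uniform, a single $\delta$ works for every $u$.

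\emph{Step 2: finite subcover.} The open balls $B(u,\delta_u)$ cover $\mathcal U$, and $\mathcal U$ is compact. We therefore extract finitely many centers $u_1,\dots,u_m$ whose balls still cover $\mathcal U$.

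\emph{Step 3: pointwise convergence on the centers.} By pointwise convergence, for each $k=1,\dots,m$ there is $N_k$ with $|f_N(u_k)-f(u_k)|<\varepsilon/3$ for all $N\ge N_k$. Set $N_0=\max_k N_k$, which is finite because there are finitely many centers.

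\emph{Step 4: triangle inequality.} Take any $u\in\mathcal U$ and choose $k$ with $u\in B(u_k,\delta_{u_k})$. Then for $N\ge N_0$,
\[
|f_N(u)-f(u)|\le |f_N(u)-f_N(u_k)|+|f_N(u_k)-f(u_k)|+|f(u_k)-f(u)|<\varepsilon .
\]
The bound does not depend on $u$, so $\sup_{u\in\mathcal U}|f_N(u)-f(u)|\le\varepsilon$ for all $N\ge N_0$. Since $\varepsilon$ was arbitrary, this proves uniform convergence.

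There is no real obstacle. The only care needed is that $\delta_u$ must be independent of $N$, which is exactly what equicontinuity provides. When the lemma is applied to the almost-sure convergence in Lemma~\ref{lem:MI-conv}, all of the above runs on a fixed sample path.

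A subtlety arises in that application: pointwise almost-sure convergence for each $u$ gives a null set that depends on $u$. I would handle this by taking a countable dense subset of $\mathcal U$, so the union of exceptional null sets is still null, and then using equicontinuity to pass from the dense subset to all of $\mathcal U$. That step is routine and belongs to the application rather than to the deterministic lemma itself.
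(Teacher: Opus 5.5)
Your proof is correct, but it takes a different route from the paper's. The paper argues by contradiction. If convergence were not uniform, some subsequence would satisfy $\sup_{u}|f_{N_k}(u)-f(u)|\ge\varepsilon$. Arzel\`a--Ascoli then extracts a further subsequence converging uniformly to some continuous $g$, pointwise convergence forces $g=f$, and this contradicts the choice of $\varepsilon$.

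Your direct $\varepsilon/3$ argument with a finite subcover replaces that compactness theorem with the elementary estimate at its core. This has three advantages:
\begin{itemize}
\item it is self-contained and constructive, with an explicit $N_0$ determined by finitely many evaluation points;
\item it needs only pointwise equicontinuity;
\item it never needs the uniform boundedness that Arzel\`a--Ascoli requires, which the paper leaves implicit.
\end{itemize}
That boundedness does follow from pointwise convergence plus equicontinuity on a compact set, so the paper's proof is fine, just terser. The paper's version is shorter once Arzel\`a--Ascoli is taken as a black box.

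The continuity of $f$ is actually redundant in both proofs, since $|f(v)-f(u)|=\lim_N|f_N(v)-f_N(u)|\le\varepsilon/3$ on $B(u,\delta_u)$.

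Your closing remark on null sets points to something the paper's proof of Theorem~\ref{thm:asymptotic_convergence} glosses over. Lemma~\ref{lem:MI-conv} gives almost-sure convergence for each fixed $u$, with a $u$-dependent exceptional set, yet the paper concludes almost-sure uniform convergence directly. Your fix fits your argument especially well. The $N$-independent Lipschitz bound \eqref{eq:C-Lip} supplies a single radius $\delta=\varepsilon/(3L_C)$, so the finitely many centers can be taken from a countable dense subset of $\mathcal U_t$. On the intersection of the corresponding countably many full-probability events, your proof then goes through verbatim.
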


\begin{pf}
Suppose the convergence is not uniform. Then there exist $\varepsilon>0$ and a subsequence
$\{f_{N_k}\}$ such that
\[
\sup_{u\in\mathcal U}|f_{N_k}(u)-f(u)|\ge \varepsilon,\qquad \forall k.
\]
Since $\mathcal U$ is compact and $\{f_N\}$ is equicontinuous, the Arzel\`a--Ascoli theorem yields a further subsequence
$\{f_{N_{k_j}}\}$ that converges uniformly to some $g\in C(\mathcal U)$.
In particular, $f_{N_{k_j}}(u)\to g(u)$ for every $u\in\mathcal U$.
But $f_{N_{k_j}}(u)\to f(u)$ pointwise by assumption, hence $g\equiv f$.
Therefore $\sup_{u\in\mathcal U}|f_{N_{k_j}}(u)-f(u)|\to 0$, which contradicts the choice of $\varepsilon$.
\end{pf}
Note that Assumption~\ref{assum:regularity}(iv) implies the Lipschitz continuity in $u$:
there exists $L_C<\infty$ such that for all $N$ and all $u,v\in\mathcal U_t$,
\begin{subequations}\label{eq:C-Lip}
\begin{align}
\big|C^{(N)}_t(\hat\beta_t^{(N)},u)-C^{(N)}_t(\hat\beta_t^{(N)},v)\big|
&\le L_C\|u-v\|,\\
\big|C_t(\beta_t,u)-C_t(\beta_t,v)\big|
&\le L_C\|u-v\|.
\end{align}
\end{subequations}
Since $\mathcal U_t$ is compact, the family
$\{u\mapsto C^{(N)}_t(\hat\beta_t^{(N)},u)\}_N$ is equicontinuous on $\mathcal U_t$.
Moreover, by Lemma~\ref{lem:MI-conv}, $C^{(N)}_t(\hat\beta_t^{(N)},u)\to C_t(\beta_t,u)$
for each fixed $u\in\mathcal U_t$, and $u\mapsto C_t(\beta_t,u)$ is continuous.
Therefore, Lemma~\ref{lem:ptwise_to_unif} yields
\[
\Delta_N^{(C)}= \sup_{u\in\mathcal U_t}\big|C^{(N)}_t(\hat\beta_t^{(N)},u)-C_t(\beta_t,u)\big|\to 0 \qquad \text{a.s.}.
\]

We next show $\Delta_N^{(V)}\rightarrow 0$. By the triangle inequality
\begin{align}
\Delta_N^{(V)}
&\le
\sup_{u_t\in\mathcal U_t}\Big|
\mathbb E\!\big[V_{t+1}^{*,(N)}(\hat\beta_{t+1}^{(N)})-V_{t+1}^{*}(\hat\beta_{t+1}^{(N)})\mid h_t,u_t\big]
\Big| \notag\\
&\quad+
\sup_{u_t\in\mathcal U_t}\Big|
\mathbb E\!\big[V_{t+1}^{*}(\hat\beta_{t+1}^{(N)})-V_{t+1}^{*}(\beta_{t+1})\mid h_t,u_t\big]
\Big| \notag \\
&\le \sup_{u_t\in\mathcal U_t}\Big(\mathbb E\!\Big[\big|V_{t+1}^{*,(N)}(\hat\beta_{t+1}^{(N)})-V_{t+1}^{*}(\beta_{t+1})\big|\mid h_t,u_t\Big]
 \notag\\ 
 & \quad+\mathbb E\!\Big[\big|V_{t+1}^{*}(\beta_{t+1})-V_{t+1}^{*}(\hat\beta_{t+1}^{(N)})\big|\mid h_t,u_t\Big]\Big)\notag  \\
& \quad+
\sup_{u_t\in\mathcal U_t}\Big|
\mathbb E\!\big[V_{t+1}^{*}(\hat\beta_{t+1}^{(N)})-V_{t+1}^{*}(\beta_{t+1})\mid h_t,u_t\big]\Big|. \notag
\end{align}
For the first term, we have (i) $V_{t+1}^{*,(N)}(\hat\beta_{t+1}^{(N)})\to V_{t+1}^{*}(\beta_{t+1})$ almost surely by the induction hypothesis (applied for this fixed $u_t$) and (ii) $V_{t+1}^{*}(\hat\beta_{t+1}^{(N)})\to V_{t+1}^{*}(\beta_{t+1})$ almost surely, due to the consistency of the particle belief update. Moreover, the finite-horizon value functions are uniformly bounded. Hence, by DCT, the conditional expectation in the first term converges to $0$ for each fixed $u_t$. To upgrade this pointwise to uniform convergence over $u_t\in\mathcal U_t$,
note that Assumption~\ref{assum:regularity}(iv) implies that the mapping
$u_t \mapsto \mathbb E\!\big[V_{t+1}^{*,(N)}(\hat\beta_{t+1}^{(N)})-V_{t+1}^{*}(\hat\beta_{t+1}^{(N)})
\mid h_t,u_t\big]$
is Lipschitz (hence equicontinuous) on $\mathcal U_t$ with a constant independent of $N$. Since $\mathcal U_t$ is compact,  Lemma~\ref{lem:ptwise_to_unif} yields that the first term converges to $0$
uniformly over $u_t\in\mathcal U_t$. 

For the second term, the same argument applies. For each fixed $u_t$, the consistency of the
particle belief update implies $V_{t+1}^{*}(\hat\beta_{t+1}^{(N)})\to V_{t+1}^{*}(\beta_{t+1})$ a.s.,
and the DCT yields pointwise convergence of the corresponding conditional expectation to $0$.
Moreover, Assumption~\ref{assum:regularity}(iv) implies a uniform (in $N$) Lipschitz property in $u_t$,
so by Lemma~\ref{lem:ptwise_to_unif} the convergence is uniform over $u_t\in\mathcal U_t$.

Combining the above yields
$\Delta_N\to 0$ almost surely. Hence by \eqref{eq:delta_N}, we obtain $| V_t^{*,(N)}(\hat\beta_t^{(N)})-V_t^*(\beta_t)|\to 0 \qquad$ almost surely. This completes the induction step and the proof.
\qed

\subsection{Proof for Corollary~\ref{prop:tightness}}\label{appendix: Derivation of difference between MI and proposed upper bound}
The gap between the particle-based MI and its tractable upper bound is
\begin{align}
&\tilde I^{(N)}(u)
= \widehat H^{(N)}_{\mathrm{KL}}(Y_t\mid h_t;u)
\notag\\&\quad-\sum_{\theta\in\mathcal S_\Theta}\hat q_t^{(N)}(\theta)
\widehat H^{(N)}_{C_\alpha}\big(Y_t\mid \theta,h_t;u\big)
 - I^{(N)}(\Theta;Y_t\mid h_t;u)\notag\\
&= \Big(\widehat H^{(N)}_{\mathrm{KL}}(Y_t\mid h_t;u)-H^{(N)}(Y_t\mid h_t;u)\Big)\notag\\
&\quad-\sum_{\theta\in\mathcal S_\Theta}\hat q_t^{(N)}(\theta)\Big(\widehat H^{(N)}_{C_\alpha}(Y_t\mid \theta,h_t;u)-H^{(N)}(Y_t\mid \theta,h_t;u)\Big)\notag\\
&\le \Big(\widehat H^{(N)}_{\mathrm{KL}}(Y_t\mid h_t;u)- H^{(N)}_{C_\alpha}(Y_t\mid h_t;u)\Big)\notag\\
&\quad-\sum_{\theta\in\mathcal S_\Theta}\hat q_t^{(N)}(\theta)\Big(\widehat H^{(N)}_{C_\alpha}(Y_t\mid \theta,h_t;u)- H^{(N)}_{\mathrm{KL}}(Y_t\mid \theta,h_t;u)\Big).\notag
\end{align}
Under the cluster conditions, \cite{kolchinsky2017estimating} have already proved that    $\widehat H^{(N)}_{\mathrm{KL}}(Y_t\mid h_t;u)- H^{(N)}_{C_\alpha}(Y_t\mid h_t;u) \leq \kappa + (|G|-1)e^{\big(-(1-|1-2\alpha|)\gamma\big)}$.
When $\theta$ is conditioned, we have $\widehat H^{(N)}_{C_\alpha}(Y_t\mid \theta,h_t;u)- H^{(N)}_{\mathrm{KL}}(Y_t\mid \theta,h_t;u)\leq \kappa_\theta + (|G_\theta|-1)e^{\big(-(1-|1-2\alpha_\theta|)\gamma_\theta\big)}$. These yield
\begin{align}
&\tilde I^{(N)}(u)
\le
\kappa + (|G|-1)e^{\big(-(1-|1-2\alpha|)\gamma\big)}
+\notag\\&\sum_{\theta\in\mathcal S_\Theta}\hat q_t^{(N)}(\theta)\Big[
\kappa_\theta + (|G_\theta|-1)e^{\big(-(1-|1-2\alpha_\theta|)\gamma_\theta\big)}
\Big].\notag
\end{align}\qed
\subsection{Proof for Theorem~\ref{th:regime_adaptive}}
\label{appendix:proof for regime bound}
Consider a homoscedastic Gaussian mixture with common covariance $\Sigma$, weights $\omega_i$, and means $\mu_i$. Let $S = \sum_{i=1}^N \omega_i(\mu_i - \bar{\mu})(\mu_i - \bar{\mu})^\top$ be the between-class scatter matrix. We define the separation matrix $A = \Sigma^{-1}S$ and consider the regime where $\|A\|_F \to 0$ (implying $\mu_i \approx \mu_j$). The total covariance of the mixture is $\Sigma_{\mathrm{glob}} = \Sigma + S$. Then $\widehat H_G$ can be calculated by $\widehat H_{\mathrm{G}}= \frac{1}{2}\log\det(2\pi e (\Sigma + S)) = \frac{1}{2}\log\det(2\pi e \Sigma) + \frac{1}{2}\log\det(I + \Sigma^{-1}S)= H_0 + \frac{1}{2}\log\det(I + A)$, where $H_0$ is the entropy of a single component. Using the Taylor expansion $\log\det(I+A) = \mathrm{tr}(A) - \frac{1}{2}\mathrm{tr}(A^2) + O(\|A\|^3)$, we obtain:
\begin{equation}\label{eq:HG_expansion}\widehat H_{\mathrm{G}} = H_0 + \frac{1}{2}\mathrm{tr}(A) + O(|A|^2).
\end{equation}
The KL-based bound is given by $\widehat H_{\mathrm{KL}} = H_0 - \sum_{i=1}^N \omega_i \log \left( \sum_{j=1}^N \omega_j e^{-D_{ij}} \right)$, where $D_{ij} = \mathrm{KL}(\mathcal{N}_i \| \mathcal{N}_j) = \frac{1}{2}(\mu_i - \mu_j)^\top \Sigma^{-1} (\mu_i - \mu_j)$. Note that $D_{ij} = O(\|A\|)$. Using the expansion $e^{-x} \approx 1 - x$, the inner sum becomes $\sum_{j} \omega_j e^{-D_{ij}} = 1 - \sum_{j} \omega_j D_{ij} + O(\|A\|^2)$. Applying $-\log(1-x) \approx x$, $\widehat H_{\mathrm{KL}}$ can be approximated by $\widehat H_{\mathrm{KL}}\approx H_0 + \sum_{i} \omega_i \left( \sum_{j} \omega_j D_{ij} \right) = H_0 + \sum_{i,j} \omega_i \omega_j \frac{1}{2}(\mu_i - \mu_j)^\top \Sigma^{-1} (\mu_i - \mu_j)$.
Using $\sum_{i,j} \omega_i \omega_j (\mu_i - \mu_j)(\mu_i - \mu_j)^\top = 2S$ in the trace:
\begin{align}
\label{eq:HKL_expansion}
\widehat H_{\mathrm{KL}} &= H_0 + \frac{1}{2}\mathrm{tr}(\Sigma^{-1}(2S)) + O(|A|^2)\notag\\& = H_0 + \mathrm{tr}(A) + O(|A|^2).
\end{align}
Comparing \eqref{eq:HG_expansion} and \eqref{eq:HKL_expansion}, the difference is: $\widehat H_{\mathrm{KL}} - \widehat H_{\mathrm{G}} = \left( H_0 + \mathrm{tr}(A) \right) - \left( H_0 + \frac{1}{2}\mathrm{tr}(A) \right) = \frac{1}{2}\mathrm{tr}(A) + O(\|A\|^2)$.  Since $\Sigma$ and $S$ are positive semi-definite, $\mathrm{tr}(A) = \mathrm{tr}(\Sigma^{-1}S) \ge 0$. Thus, for sufficiently small separation (where the first-order term dominates), we have $\widehat H_{\mathrm{G}} \le \widehat H_{\mathrm{KL}}$.
\qed

\subsection{Proof for Theorem~\ref{lem: gmm-closed}}\label{appendix: Lem_gmm-closed}
For particle $i$, suppose the data-sharing policy is a finite Gaussian mixture, i.e., $k_t\big(y\mid \theta_i,x_{i,t},\hat\beta_t^{(N)};u\big) =\sum_{k=1}^{K_i} a_{i,k,t}(u)
\mathcal N\big(y;\mu_{i,k,t}(u),\Sigma_{i,k,t}(u)\big)$, where nonnegative mixture weights satisfy $\sum_{k=1}^{K_i} a_{i,k,t}(u)=1$, and the Gaussian parameters 
$\big(\mu_{i,k,t}(u),\Sigma_{i,k,t}(u)\big)$ may depend on $u$. The history-conditioned marginal density of $Y_t$ is
\begin{align}
p^{(N)}(y\mid h_t;u)
&=\iint k_t(y\mid \theta,x,\hat\beta_t^{(N)};u)\,\hat\beta_t^{(N)}(d\theta,dx)\notag\\
&=\sum_{i=1}^N \omega_{i,t}\,k_t\!\big(y\mid \theta_i,x_{i,t},\hat\beta_t^{(N)};u\big)\notag\\
&=\sum_{i=1}^N\sum_{k=1}^{K_i}\tilde\omega_{i,k,t}(u)\,\mathcal N_{i,k,t}(y;u),\notag
\end{align}
where $\mathcal N_{i,k,t}(\cdot;u) = \mathcal N\!\big(\cdot;\mu_{i,k,t}(u),\Sigma_{i,k,t}(u)\big)$ is the component density of particle $i$ and $\tilde\omega_{i,k,t}(u) = \omega_{i,t}\,a_{i,k,t}(u)$ indicates its (global) mixture weights, which satisfies $\sum_{i=1}^N\sum_{k=1}^{K_i}\tilde\omega_{i,k,t}(u)=1$. 
Hence $p^{(N)}(y\mid h_t;u)$ is itself a single finite GMM whose Gaussian components are
$\{\mathcal N_{i,k,t}(\cdot;u)\}_{i,k}$ with weights $\{\tilde\omega_{i,k,t}(u)\}_{i,k}$.

Similarly, the conditional density of $Y_t$ given $\Theta=\theta$ is
\begin{align}
&p^{(N)}(y\mid \theta,h_t;u)
=\int k_t(y\mid \theta,x,\hat\beta_t^{(N)};u)\,\hat b_t^{(N)}(dx\mid \theta)\notag\\
&=\sum_{i\in\mathcal I_\theta}\bar\omega_{i,t}(\theta)\,
k_t\!\big(y\mid \theta_i,x_{i,t},\hat\beta_t^{(N)};u\big)\notag\\
&=\sum_{i\in\mathcal I_\theta}\sum_{k=1}^{K_i}\tilde\omega_{i,k,t}(\theta;u)\,\mathcal N_{i,k,t}(y;u),\notag
\end{align}
where $\tilde\omega_{i,k,t}(\theta;u)
=\bar\omega_{i,t}(\theta)a_{i,k,t}(u)$ is the conditional mixture weight of particle $i$, satisfying $\sum_{i\in\mathcal I_\theta}\sum_{k=1}^{K_i}\tilde\omega_{i,k,t}(\theta;u)=1$. Thus, $p^{(N)}(y\mid \theta,h_t;u)$ is also a finite GMM (restricted to indices
$i\in\mathcal I_\theta$). Applying Theorem~\ref{th:regime_adaptive} to these mixtures yields an explicit, computable upper bound for $I^{(N)}(\Theta;Y_t\mid h_t)$. This proves Lemma~\ref{lem: gmm-closed}.
\qed

\subsection{Proof for Corollary~\ref{cor:any_policy}}\label{appendix:Cor_any_policy}
Let $\varepsilon>0$ be arbitrary.
By the $L^1$/total-variation denseness of finite GMM for conditional densities
\cite{nguyen2020approximation,norets2010approximation}, there exists a finite GMM policy kernel $\hat{\mathcal K}_{J(\delta)}$ with density
$\hat k_{t,J(\delta)}(\cdot\mid \theta,x,\hat\beta_t^{(N)})$ such that, for some $\delta>0$,
\begin{align}
\sup_{(\theta,x)}
\Big\|
k_t(\cdot\mid \theta,x,\hat\beta_t^{(N)})-\hat k_{t,J(\delta)}(\cdot\mid \theta,x,\hat\beta_t^{(N)})
\Big\|_{1}
\le \delta .\notag
\end{align}
Under Assumption~\ref{assum:regularity}, conditional entropy and MI are $L^1$/TV-continuous. Hence there exists a modulus of continuity $\omega(\cdot)$ with $\omega(\delta)\to0$
as $\delta\to0$ such that
\begin{align}
\Big|
I^{(N)}(\Theta;Y_t\mid h_t;\mathcal K_t)
-
I^{(N)}(\Theta;Y_t\mid h_t;\hat{\mathcal K}_{J(\delta)})
\Big|
\le \omega(\delta).
\label{eq:MI_continuity}
\end{align}
Applying Theorem~\ref{th:regime_adaptive} to the finite GMM policy $\hat{\mathcal K}_{J(\delta)}$ yields the computable upper bound
\begin{align}
&I^{(N)}(\Theta;Y_t\mid h_t;\hat{\mathcal K}_{J(\delta)})
\le \widehat I^{(N)}_{\hat{\mathcal K}_{J(\delta)}, \mathrm{RA}}(\Theta;Y_t\mid h_t).
\label{eq:MI_bound_GMM}
\end{align}
Combining \eqref{eq:MI_continuity} and \eqref{eq:MI_bound_GMM} gives
\begin{align}
&I^{(N)}(\Theta;Y_t\mid h_t;\mathcal K_t)
\le 
\widehat I^{(N)}_{\hat{\mathcal K}_{J(\delta)}, \mathrm{RA}}(\Theta;Y_t\mid h_t)
+\omega(\delta).\notag
\end{align}
Finally, choose $\delta$ such that $\omega(\delta)\le\varepsilon$.
This completes the proof.\qed

\end{document}